\newtheorem{theorem}{{\bf Theorem}}
\newtheorem{lemma}{{\bf Lemma}}
\newtheorem{example}{{\bf Example}}
\newtheorem{definition}{{\bf Definition}}
\title{Monotonicity axioms in approval-based multi-winner voting rules}
\author{Luis S\'anchez-Fern\'andez$^1$, Jes\'us A. Fisteus$^1$
\\
$^1$Universidad Carlos III de Madrid, Spain \\
luiss@it.uc3m.es,
jaf@it.uc3m.es
}
\begin{document}

\maketitle

\begin{abstract}

In this paper we study several monotonicity axioms in approval-based
multi-winner voting rules. We consider monotonicity with respect to
the support received by the winners and also monotonicity in the size
of the committee. Monotonicity with respect to the support is studied
when the set of voters does not change and when new voters enter the
election. For each of these two cases we consider a strong and a weak
version of the axiom. We observe certain incompatibilities between the
monotonicity axioms and well-known representation axioms
(extended/proportional justified representation) for the voting rules
that we analyze, and provide formal proofs of incompatibility between
some monotonicity axioms and perfect representation. 

\end{abstract}

\section{Introduction}

There are many situations in which it is necessary to aggregate the
preferences of a group of agents to select a finite set of
alternatives. Typical examples are the election of representatives in
indirect democracy, shortlisting candidates for a
position~\cite{elkind:scw17,barbera2008choose}, selection by a company
of the group of products that it is going to offer to its
customers~\cite{lu2011budgeted}, selection of the web pages that
should be shown to a user in response to a given
query~\cite{dwork2001rank,2016arXiv161201434S}, peer grading in
Massive Open Online Courses (MOOCs)~\cite{comSoc:peerGrading} or
recommender systems~\cite{elkind:scw17,naamani2014preference}. The
typical mechanism for such preference aggregations is the use of
multi-winner voting rules.

The use of axioms for analyzing voting rules is well established in
social choice and dates back to the work of
\citeauthor{arrow2012social}~\shortcite{arrow2012social}.  However,
multi-winner voting rules have not been studied much so far from an
axiomatic perspective. In particular, we can cite the work of
\citeauthor{dummet}~\shortcite{dummet},
\citeauthor{elkind:scw17}~\shortcite{elkind:scw17},
\citeauthor{Faliszewski:2019:CSR}~\shortcite{Faliszewski:2019:CSR},
and \citeauthor{woodall:prop}~\shortcite{woodall:prop} for
multi-winner elections that use ranked ballots.  For approval-based
multi-winner elections the concept of representation has been recently
axiomatized by \citeauthor{aziz:scw}~\shortcite{aziz:scw}, who
proposed two axioms called justified representation and extended
justified representation, and
\citeauthor{pjr-aaai}~\shortcite{pjr-aaai}, who proposed a weakening
of extended justified representation that they called proportional
justified representation.

In this paper we complement these previous works with the study of
monotonicity axioms for approval-based multi-winner voting rules.
First of all, we consider monotonicity in the support received by the
winners. Informally, the idea of monotonicity in the support is that
if a subset of the winners in an election sees their support increased
and the support of all the other candidates remains the same, then it
seems reasonable that such candidates should remain in the set of
winners.  Monotonicity with respect to the support is studied when the
set of voters does not change and when new voters enter the
election. Our first contribution is to propose an axiom for each of
these two cases and, for each of these two axioms, to define a strong
and a weak version of the axiom.  We also consider monotonicity in the
size of the committee, although in this case we will reuse an axiom
that has already been proposed by
\citeauthor{elkind:scw17}~\shortcite{elkind:scw17}.

Following the work of
\citeauthor{elkind:scw17}~\shortcite{elkind:scw17} and
\citeauthor{faliszewski2017multiwinner}~\shortcite{faliszewski2017multiwinner}
we will discuss the relevance of these axioms in three different types of
scenarios:

\begin{itemize}

\item

{\bf Excellence}. The goal is to select the best $k$ candidates for a
given purpose. It is supposed that in a second step (out of the scope
of the multi-winner election) one of the selected candidates is finally
selected. Examples of this type of elections are choosing the
finalists of a competition or shortlisting of candidates for a position.

\item

{\bf Diversity}. In this case the goal is that as many voters as
possible have one of their preferred candidates in the
committee. Several examples of this type are discussed by
\citeauthor{elkind:scw17}~\shortcite{elkind:scw17} and
\citeauthor{faliszewski2017multiwinner}~\shortcite{faliszewski2017multiwinner}.
One such example is to select the set of movies that are going to be
offered to the passengers during an air flight (the airline company is
interested in that all passengers find something that they like).

\item

{\bf Proportional representation}. In this case the goal is to select
a committee that represents as precisely as possible the opinions of
the society. The typical example of this scenario are parliamentary
elections.  
  
\end{itemize}

Then, we analyze several well-known voting rules with these axioms. We
observe certain incompatibilities between the monotonicity axioms and
extended/proportional justified representation for the voting rules
that we analyze and provide formal proofs of incompatibility between
some of these axioms and perfect representation (another axiom
proposed by \citeauthor{pjr-aaai}~\shortcite{pjr-aaai}). At the end of
this paper we review briefly some previous works that study 
monotonicity axioms in approval-based multi-winner elections, draw
some conclusions and outline some lines of continuation of this work.

\section{Preliminaries}

We consider elections in which a fixed number $k$ of candidates or
alternatives must be chosen from a set of candidates $C$. We assume
that $|C| \geq k \geq 1$. The set of voters is represented as $N= \{1,
\ldots, n\}$.  Each voter $i$ that participates in the election casts
a ballot $A_i$ that consists of the subset of the candidates that the
voter approves of (that is, $A_i \subseteq C$). We refer to the
ballots cast by the voters that participate in the election as the
ballot profile $\mathcal{A}= (A_1, \ldots, A_n)$.  An approval-based
multi-winner election $\mathcal{E}$ is therefore represented by
$\mathcal{E}= (N, C, \mathcal{A}, k)$. The set of voters $N$ and the
set of candidates $C$ will be omitted when they are clear from the
context.

Given a voting rule $R$, for each election $\mathcal{E}= (\mathcal{A},
k)$, we say that $R(\mathcal{E})$ is the output of the voting rule $R$
for such election. Ties may happen in the voting rules that we are
going to consider. To take this into account, given an election
$\mathcal{E}$ and a voting rule $R$ we say that the value of
$R(\mathcal{E})$ is the set of size at least one composed of all the
possible sets of winners outputted by rule $R$ for election
$\mathcal{E}$. We say that a candidates subset $W$ of size $k$ is a
set of winners for election $\mathcal{E}$ and rule $R$ if $W$ belongs
to $R(\mathcal{E})$. We stress that our results are to a large extent
independent of how ties are broken.

Given an election $\mathcal{E}= (N, C, \mathcal{A}, k)$ and a
non-empty candidates subset $G$ of $C$, we define $\mathcal{E}_{\Delta
  G}$, as the election obtained by adding to election $\mathcal{E}$
one voter that approves of only the candidates in $G$. That
is, $\mathcal{E}_{\Delta G}= (N_{\Delta G}= \{1, \ldots, n, n+1\},
C, \mathcal{A}_{\Delta G}=(A_1, \ldots, A_n, G), k)$. Given a
non-empty candidates subset $G$ and a voter $i \in N$ such that she
does not approve of any of the candidates in $G$ we define
$\mathcal{E}_{i+G}$, as the election obtained if voter $i$ decides to
approve of all the candidates in $G$ in addition to the candidates in
$A_i$. That is, $\mathcal{E}_{i+G}= (N, C, \mathcal{A}_{i+G}= (A_1,
\ldots, A_{i-1}, A_i \cup G, A_{i+1}, \ldots, A_n), k)$.

We recall now the notions of justified representation and extended
justified representation due to
\citeauthor{aziz:scw}~\shortcite{aziz:scw}, and of proportional
justified representation due to
\citeauthor{pjr-aaai}~\shortcite{pjr-aaai}.

\begin{definition}
\label{def:jr}  
Consider an election $\mathcal{E} = (N, C, \mathcal{A}, k)$.  Given a
positive integer $\ell\in \{1, \ldots, k\}$, we say that a set of
voters $N^*\subseteq N$ is {\em $\ell$-cohesive} if $|N^*| \geq \ell
\frac{n}{k}$ and $|\bigcap_{i \in N^*} A_i| \geq \ell$. We say that a
set of candidates $W$, $|W| = k$, provides {\em justified
  representation} (JR) for $\mathcal{E}$ if for every $1$-cohesive set
of voters $N^* \subseteq N$ it holds that there exists a voter $i$ in $N^*$
such that $A_i \cap W \ne \emptyset$. We say that a set of
candidates $W$, $|W| = k$, provides {\em extended justified
  representation} (EJR) (respectively, {\em proportional justified
  representation} (PJR)) for $\mathcal{E}$ if for every $\ell\in\{1,
\ldots, k\}$ and every $\ell$-cohesive set of voters $N^* \subseteq N$
it holds that there exists a voter $i$ in $N^*$ such that $|A_i \cap W| \geq
\ell$ (respectively, $|W \cap (\bigcup_{i \in N^*} A_i)| \ge
\ell$). We say that an approval-based voting rule satisfies {\em
  justified representation} (JR), {\em extended justified representation}
  (EJR), or {\em proportional justified representation} (PJR) if for every
election $\mathcal{E} = (N, C, \mathcal{A}, k)$ it outputs a committee
that provides JR, EJR, or PJR, respectively, for $\mathcal{E}$.
\end{definition}

\citeauthor{aziz:scw}~\shortcite{aziz:scw}, and
\citeauthor{pjr-aaai}~\shortcite{pjr-aaai} prove that EJR implies PJR
and that PJR implies JR, both for rules and for committees.

Below we introduce the voting rules that we are going to consider in
this study. First of all, we present the following voting rules,
surveyed by \citeauthor{kilgour10}~\shortcite{kilgour10}.

{\bf Approval Voting (AV)}. Under AV, the winners are the $k$
candidates that receive the largest number of votes. Formally, for
each approval-based multi-winner election $(\mathcal{A}, k)$, the
approval score of a candidate $c$ is $|\{i: c \in A_i\}|$. The $k$
candidates with the highest approval scores are chosen.

{\bf Satisfaction Approval Voting (SAV)}. A voter's {\it satisfaction
  score} is the fraction of her approved candidates that are
elected. SAV maximizes the sum of the voters' satisfaction
scores. Formally, for each approval-based multi-winner election
$\mathcal{E}=(\mathcal{A}, k)$:

\begin{equation}
\textrm{SAV}(\mathcal{E}) = \underset{W \subseteq C:
  |W| = k}{\textrm{argmax}} \ 
\sum_{i \in N} \frac{|A_i \cap W|}{|A_i|}.
\end{equation}

{\bf Minimax Approval Voting (MAV)}. MAV selects the set of candidates
$W$ that minimizes the maximum {\it Hamming
  distance}~\cite{hamming1950error} between $W$ and
the voters' ballots. Let
$d(A,B) = |A \setminus B| + |B \setminus A|$, for each pair of
candidates subsets $A$ and $B$. Then,
for each approval-based multi-winner election $\mathcal{E} =(\mathcal{A}, k)$:

\begin{equation}
\textrm{MAV}(\mathcal{E})= \underset{W \subseteq C: |W| = k}
{\textrm{argmin}} \Big( \max_{i \in N} d(W, A_i) \Big).
\end{equation}

Since we are interested in the compatibility between representation
axioms and monotonicity axioms we are also going to study several
rules that satisfy some of the above mentioned representation axioms.

{\bf Chamberlin and Courant rule and Monroe rule}. The voting rules
proposed by \citeauthor{chamberlin}~\shortcite{chamberlin} and
\citeauthor{monroe}~\shortcite{monroe} select sets of winners that
minimize the misrepresentation of the voters (the number of voters
represented by a candidate that they do not approve of). The
difference between the rule of Chamberlin and Courant (CC) and the
rule of Monroe is that in CC each candidate may represent an arbitrary
number of voters while in the Monroe rule each candidate must
represent at least $\lfloor \frac{n}{k} \rfloor$ and at most $\lceil
\frac{n}{k} \rceil$ voters. For each approval-based multi-winner
election $\mathcal{E}=(\mathcal{A}, k)$:

\begin{equation}
\textrm{CC}(\mathcal{E}) = \displaystyle 
\underset{W \subseteq C: |W| = k}
{\textrm{argmin}}\ |\{i: A_i \cap W = \emptyset\}|.
\end{equation}

Given an election $\mathcal{E}=(\mathcal{A}, k)$ and a candidates subset
$W$ of size $k$ let $M_{N,W}$ be the set of all mappings $\pi:
N \rightarrow W$ such that for each candidate $c$ in $W$ it holds that
$\lfloor \frac{n}{k} \rfloor \leq
|\{i: \pi(i)=c\}| \leq \lceil \frac{n}{k} \rceil$. Then,

\begin{equation}
\textrm{Monroe}(\mathcal{E}) = \displaystyle 
\underset{W \subseteq C: |W| = k}
{\textrm{argmin}} \min_{\pi \in M_{N,W}} |\{i: \pi(i) \notin A_i \}|.
\end{equation}

{\bf Proportional Approval Voting (PAV)}
and {\bf Sequential Proportional Approval Voting (SeqPAV)} were
proposed by the Danish mathematician
\citeauthor{thiele:pav-rav}~\shortcite{thiele:pav-rav} in the late
19th century. Given an election $\mathcal{E}=(\mathcal{A},k)$ and a
candidates subset $W$ of size $k$, the PAV-score of a voter $i$ is 0
if such voter does not approve of any of the candidates in $W$ and
$\sum_{j=1}^{|A_i \cap W|} \frac{1}{j}$ if the voter approves of some
of the candidates in $W$. PAV selects the sets of winners that
maximize the sum of the PAV-scores of the voters.

\begin{equation}
\textrm{PAV}(\mathcal{E}) = \displaystyle 
\underset{W \subseteq C: |W| = k}
{\textrm{argmax}} \sum_{i: A_i \cap W \ne \emptyset} 
\sum_{j=1}^{|A_i \cap W|} \frac{1}{j}. 
\end{equation}

Under the SeqPAV, the set of winners is computed with an iterative
algorithm in which at each iteration the candidate with highest SeqPAV
score is added to the set of winners. The SeqPAV score of a candidate
$c$ at iteration $j$ is computed as follows:

\begin{equation}
\textrm{SeqPAV}(c) = \sum_{i:c \in A_i} 
 \frac{1}{1 + |A_i \cap W_{j-1}|}.
\label{eq:rav}
\end{equation}

Here $W_{j-1}$ is the set of the first $j-1$ candidates added by
SeqPAV to the set of winners.

{\bf Phragm\'en rules} Phragm\'en rules were proposed by the Swedish
mathematician~\citeauthor{phragmen:p1}~\shortcite{phragmen:p1,phragmen:p2,phragmen:pOpt,phragmen:p4}
in the late 19th century. We refer to the survey by
\citeauthor{2016arXiv161108826J}~\shortcite{2016arXiv161108826J} for
an extensive discussion of Phragm\'en rules.

Phragm\'en voting rules are based on the concept of {\it load}. Each
candidate in the set of winners incurs in one unit of load, that
should be distributed among the voters that approve of such
candidate. The goal is to choose the set of winners such that the
total load is distributed as evenly as possible between the voters.

Formally, given an election $\mathcal{E} = (\mathcal{A}, k)$ and a
candidates subset $W \subseteq C$, $|W|=k$, a load distribution is
a two dimensional array ${\bf x}= (x_{i,c})_{i \in N, c \in W}$,
that satisfies the following three conditions:

\begin{eqnarray}
0 \leq x_{i,c} \leq 1 & & \textrm{for all } i \in N \textrm{ and } c \in W, 
\label{eq:load1} \\
x_{i,c}= 0 & & \textrm{if } c \notin A_i, \label{eq:load2} \textrm{and} \\
\sum_{i \in N} x_{i,c} = 1 & & \textrm{for all } c \in W.
\label{eq:load4} 
\end{eqnarray}

Given a load distribution ${\bf x}$, the load of each voter $i$ is
defined as $x_i= \sum_{c \in W} x_{i,c}$. Then, given an election
$\mathcal{E}$, the rule max-Phragm\'en outputs the set of winners $W$
that minimizes the maximum voter load.

Under seq-Phragm\'en the load of each voter changes (increases) at
each iteration as candidates are added to the set of winners. For
$j=0, \ldots, k$, let $x_i^{(j)}$ be the load of voter $i$ after $j$
iterations of seq-Phragm\'en. The initial load $x_i^{(0)}$ of each
voter $i$ is set to 0.

At each iteration $j+1$ the load $s_c^{(j+1)}$ associated to each
  candidate $c$ is computed as:

\begin{equation}
s_c^{(j+1)}= \frac{1 + \sum_{i: c \in A_i} x_i^{(j)}}{|\{i: c \in A_i\}|}.
\label{eq:seq-phr}
\end{equation}

The underlying idea of this expression is to distribute equally
between all the voters that approve of candidate $c$ the unit of load
corresponding to such candidate plus the load that each of such voters
had after the first $j$ iterations. Then, at each iteration the
candidate $w$ with the lowest load is added to the set of winners and
the loads of the voters are updated as follows: for each voter $i$
that approves of candidate $w$, we have $x_i^{(j+1)}= s_w^{(j+1)}$,
while the load of each voter $h$ that does not approve of candidate
$w$ does not change: $x_h^{(j+1)}= x_h^{(j)}$.

\section{Support monotonicity}

Some previous work (see Section~\ref{sec:relw}) make use of the
following idea of support monotonicity: if a candidate that was
already in the set of winners is added to the ballot of some voter
(without changing anything else in the election), then such candidate
must still belong to the set of winners. We will refer to this axiom
as {\it candidate monotonicity}.

\begin{definition}
We say that a rule $R$ satisfies {\em candidate monotonicity} if for
each election $\mathcal{E} = (N, C, \mathcal{A}, k)$, for each
candidate $c \in C$, and for each voter $i$ that does not approve of
$c$, the following conditions hold: (i) if $c$ belongs to some winning
committee in $R(\mathcal{E})$, then $c$ must also belong to some
winning committee in $R(\mathcal{E}_{i+\{c\}})$; and (ii) if $c$
belongs to all winning committees in $R(\mathcal{E})$, then $c$ must
also belong to all winning committees in $R(\mathcal{E}_{i+\{c\}})$.
\end{definition}

Candidate monotonicity can be seen as the equivalent of the axiom with
the same name proposed by
\citeauthor{elkind:scw17}~\shortcite{elkind:scw17} for ranked
ballots\footnote{\citeauthor{elkind:scw17}~\shortcite{elkind:scw17}
  also proposed another axiom called non-crossing monotonicity that
  will not be considered in this paper.}. They require that, if a
winning candidate $c$ is moved forward in some vote, then $c$ must still
belong to some winning committee.

\citeauthor{elkind:scw17}~\shortcite{elkind:scw17} justified
this axiom with the following idea:

\begin{quote}
  
``If $c$ belongs to a winning committee $W$ then, generally speaking,
  we cannot expect $W$ to remain winning when $c$ is moved forward in
  some vote, as this shift may hurt other members of $W$.''

\end{quote}

In this paper we propose to extend the notion of candidate
monotonicity for approval-based multi-winner voting rules in several
directions. First of all, we study what happens when a subset $G$ of
the candidates that was already in the set of winners $W$ is added to
the ballot of some voter. Following the idea of
\citeauthor{elkind:scw17}~\shortcite{elkind:scw17} that we have quoted
before, we believe that we cannot expect $W$ to remain winning, but we
can expect that all the candidates in $G$ (strong version) or, at
least, some of the candidates in $G$ (weak version) remain winning.

Secondly, we consider monotonicity when a new voter enters
the election and approves of a subset of the candidates that was
already in the set of winners. Again, we define a strong and a weak
version of this axiom. A similar idea of monotonicity when new voters
enter the election has been proposed by
\citeauthor{woodall:prop}~\shortcite{woodall:prop} for ranked ballots.

\begin{definition}
We say that a rule $R$ satisfies {\em strong support monotonicity with
  population increase} (respectively, {\em weak support monotonicity
  with population increase}) if for each election $\mathcal{E} = (N,
C, \mathcal{A}, k)$, and for each non-empty subset $G$ of $C$, such
that $|G| \leq k$, the following conditions hold: (i) if $G \subseteq
W$ for some $W \in R(\mathcal{E})$, then $G \subseteq W'$ for some $W'
\in R(\mathcal{E}_{\Delta G})$ (respectively, $G \cap W' \ne
\emptyset$ for some $W' \in R(\mathcal{E}_{\Delta G})$); and (ii) if
$G \subseteq W$ for all $W \in R(\mathcal{E})$, then $G \subseteq W'$
for all $W' \in R(\mathcal{E}_{\Delta G})$ (respectively, $G \cap W'
\ne \emptyset$ for all $W' \in R(\mathcal{E}_{\Delta G})$).

We say that a rule $R$ satisfies {\em strong support monotonicity
  without population increase} (respectively, {\em weak support
  monotonicity without population increase}) if for each election
$\mathcal{E} = (N, C, \mathcal{A}, k)$, for each non-empty subset $G$
of $C$, such that $|G| \leq k$, and for each voter $i$ such that $A_i
\cap G = \emptyset$, the following conditions hold: (i) if $G
\subseteq W$ for some $W \in R(\mathcal{E})$, then $G \subseteq W'$
for some $W' \in R(\mathcal{E}_{i+G})$ (respectively, $G \cap W' \ne
\emptyset$ for some $W' \in R(\mathcal{E}_{i+G})$); and (ii) if $G
\subseteq W$ for all $W \in R(\mathcal{E})$, then $G \subseteq W'$ for
all $W' \in R(\mathcal{E}_{i+G})$ (respectively, $G \cap W' \ne
\emptyset$ for all $W' \in R(\mathcal{E}_{i+G})$).
\end{definition}

We believe that it is important to know what happens when the support
of several of the candidates in the set of winners is incremented
simultaneously. Moreover, our results show that for each of the rules
that we consider that satisfies any of the support monotonicity axioms
(with or without population increase) for $|G|=1$, such rule also
satisfies the corresponding weak support monotonicity axiom (for all
values of $|G|$), which is slightly stronger, and therefore provides
more information about the behaviour of the rule. Because of this, we
do not study candidate monotonicity in this paper. We note, however,
that we have been able to build (weird) rules that satisfy support
monotonicity with or without population increase for $|G|=1$ but fail
the corresponding weak axiom (examples can be found in
appendix~\ref{ap:examples}).

We now discuss briefly the relevance of these axioms for the three
types of scenarios considered in the Introduction. First of all we
note that it is a general property of elections to desire to select
winners that receive a high support, and therefore we believe that our
weak axioms are generally desirable.

In the case of excellence, we believe that the strong axioms are
highly preferable to the weak ones. Since we are looking for the best
candidates, adding support to a subset of the candidates that were
already considered to be among the best should make all of them stay
in the set of winners.

In contrast, in the case of diversity we believe that satisfying the
strong axioms is not important for the rule used in the election. We
recall that the goal of the election is that every voter has one of
her preferred candidates in the set of winners. If the support of a
subset $G$ of the winners was increased and at least one of them
remained in the set of winners, then the voters that approve of the
candidates in $G$ would be satisfied. Removing some of the candidates
in $G$ from the set of winners may allow to add other candidates
approved by other voters that did not have previously any of their
approved candidates in the set of winners. Therefore, for an election
of the diversity type, we believe that it would be enough if the rule
satisfies the weak axioms.

\citeauthor{pjr-aaai}~\shortcite{pjr-aaai} distinguish two types of
proportional representation. In the first type of proportional
representation the aim is that each voter is represented by a
candidate that she approves of and that each candidate represents the
same number of voters. The typical example of this type of scenario
are parliamentary elections. As in the case of diversity, in this type
of scenario we believe that it is enough to satisfy the weak
axioms. Regarding the second type of proportional representation
considered by \citeauthor{pjr-aaai}~\shortcite{pjr-aaai}, the goal is
that, for each $\ell$-cohesive group of voters (see
Definition~\ref{def:jr}), as most voters of the group as possible
approve of at least $\ell$ of the candidates in the set of
winners. \citeauthor{pjr-aaai}~\shortcite{pjr-aaai} present as an
example of this type of elections the selection of researchers invited
to give a seminar in an academic department. We believe that the
situation in this case is less clear. It seems that the weak axioms
are not enough for this situation because a voter may not be satisfied
with having only one of her preferred candidates in the set of
winners. However, the strong axioms are maybe too strong if a voter
that belongs to an $\ell$-cohesive group of voters decides to approve
of a subset of the set of winners $G$ of size greater than $\ell$.

From now on, we will refer to support monotonicity with population
increase as SMWPI and to support monotonicity without population
increase as SMWOPI. Table~\ref{tab:comparisonSummary} summarizes the
results we have obtained in this paper. With respect to the support
monotonicity axioms (columns entitled ``SMWPI'' and ``SMWOPI'') we use
the keys ``Str.'' when the rule satisfies the strong version of the
axiom, ``Wk.'' when the rule satisfies the weak version of the axiom
and ``No'' when the rule does not satisfy any of them. The column
entitled ``C. M.'' contains the results related to committee
monotonicity, which is discussed in Section~\ref{sec:comMon}.

For completeness, we also include previous results related to the
computational complexity of the rules and the representation axioms
that they satisfy, including pointers to the appropriate
references. The column entitled ``JR/PJR/EJR'' shows for each rule the
strongest of these axioms satisfied by the rule. The next column says
which rules satisfy the perfect representation axiom (PR), that will
be discussed in Section~\ref{sec:comp}.

\ctable[
pos= htb,
star,
caption= {Properties of approval-based 
multi-winner voting 
rules\label{tab:comparisonSummary}},
botcap
]{|l|l|l|l|l|l|l|} 
{\tnote[a]{Results taken from~\cite{aziz2014computational} 
and~\cite{skowron2016finding}.}
\tnote[b]{Results taken from~\cite{procaccia:complex}.}
\tnote[c]{Results taken from~\cite{brill:phragmen}.}
\tnote[d]{Results taken from~\cite{aziz:scw}.}
\tnote[e]{Monroe satisfies PJR if $k$ divides $n$~\cite{pjr-aaai}.} 
\tnote[f]{max-Phragm\'en satisfies PJR when combined with certain
    tie-breaking rule~\cite{brill:phragmen}.}
\tnote[g]{CC satisfies PR if ties are broken always in favour of
the candidates subsets that provide PR.}
\tnote[h]{Results taken from~\cite{pjr-aaai}.}
\tnote[i]{Results taken from~\cite{2016arXiv161108826J},
\cite{mora2015eleccions}, and~\cite{phragmen:pOpt}.}
\tnote[j]{Results taken from~\cite{thiele:pav-rav}.}
\tnote[k]{Results taken from~\cite{legrand2006approval}.}
}
{\hline
{\bf Rule} & {\bf Complexity} & {\bf JR/PJR/EJR} & {\bf PR}
& {\bf SMWPI} & {\bf SMWOPI} & {\bf C. M.}\\ \hline
AV & P\tmark[\ a] & No\tmark[\ d] & No\tmark[Ex.~\ref{ex:pr-av}] & Str.\tmark[Thm.~\ref{th:one}] & Str.\tmark[Thm.~\ref{th:three}] & Yes\\ \hline
SAV & P\tmark[\ a] & No\tmark[\ d] & No\tmark[Ex.~\ref{ex:pr-av}] & Str.\tmark[Thm.~\ref{th:one}] & Str.\tmark[Thm.~\ref{th:three}] & Yes\\ \hline
MAV & NP-hard\tmark[k] & No\tmark[d] & No\tmark[Ex.~\ref{ex:pr-mav}] & Str.\tmark[Thm.~\ref{th:mav}] & Wk.\tmark[Thm.~\ref{th:mav}] & No\tmark[Ex.~\ref{ex:cm-mav}]\\ \hline
CC & NP-comp.\tmark[b] & JR\tmark[\ d] & Yes\tmark[g, Ex.~\ref{ex:pr-cc}] & 
Str.\tmark[Thm.~\ref{th:one}] & Wk.\tmark[Thm.~\ref{th:two}] & No\tmark[Ex.~\ref{ex:cm-monroecc}]\\ \hline
Monroe & NP-comp.\tmark[b] & JR\tmark[\ d,e] & Yes\tmark[\ h] &
No\tmark[Ex.~\ref{ex:monWSMWPI}] & Wk.\tmark[Thm.~\ref{th:four}] & No\tmark[Ex.~\ref{ex:cm-monroecc}]\\ \hline
PAV & NP-comp.\tmark[a] & EJR\tmark[\ d] & No\tmark[\ h] & Str.\tmark[Thm.~\ref{th:one}] & Wk.\tmark[Thm.~\ref{th:two}] & No\tmark[\ j]\\ \hline
SeqPAV & P\tmark[a] & No\tmark[d,h] & No\tmark[Ex.~\ref{ex:pr-seqpav}] & Wk.\tmark[Thm.~\ref{th:seqpav}] & Wk.\tmark[Thm.~\ref{th:seqpav}] & Yes\\ \hline
max-Phragm\'en & NP-comp.\tmark[c] & PJR\tmark[\ c,f] & Yes\tmark[\ c]
& Wk.\tmark[i, Thm.~\ref{th:six}] & Wk.\tmark[i, Thm.~\ref{th:six}] & No\tmark[\ i]\\ \hline
seq-Phragm\'en & P\tmark[\ c] & PJR\tmark[\ c] & No\tmark[\ c] & 
Wk.\tmark[i] & Wk.\tmark[i] & Yes \\ \hline
}

An important type of rules in approval-based multi-winner elections are
{\em approval-based multi-winner counting rules}, which, as discussed
by \citeauthor{2017arXiv170402453L}~\shortcite{2017arXiv170402453L,lackner2018approval},
can be seen as analogous to the class of committee scoring rules
introduced by \citeauthor{elkind:scw17}~\shortcite{elkind:scw17} for
ranked-based multi-winner elections.

\begin{definition}
A counting function $f: \{1, \ldots, k\} \times \{1, \ldots,
|C|\} \rightarrow \mathbb{R}$ is a function that satisfies that
$f(x,y) \geq f(x',y)$ whenever $x > x'$. Intuitively, a counting
function $f$ defines the score $f(x,y)$ that a certain counting rule
$r_f$ assigns to a voter $i$ that approves of $x$ candidates in the
set of winners $W$ and $y$ candidates in total. Given a counting
function $f$, and an election $\mathcal{E}= (\mathcal{A}, k)$, the
total score of a candidates subset $W$ for counting function $f$ is

\begin{displaymath}   
s_f(W, \mathcal{E})= \sum_{i \in N} f(|A_i \cap W|, |A_i|), 
\end{displaymath}   

and the counting rule $r_f$ associated to counting function $f$ is
defined as follows:

\begin{displaymath}   
r_f(\mathcal{E})= \underset{W \subseteq C: |W| = k}{\textrm{argmax}} 
s_f(W, \mathcal{E}).
\end{displaymath}   

\end{definition} 

As discussed by
\citeauthor{2017arXiv170402453L}~\shortcite{2017arXiv170402453L,lackner2018approval}
several of the voting rules that we have presented in the previous
section are counting rules. In particular, we have
$f_{\textrm{AV}}(x,y)= x$ for AV, $f_{\textrm{SAV}}(x,y)= \frac{x}{y}$
for SAV, $f_{\textrm{CC}}(x,y)= 1$ if $x > 0$ and
$f_{\textrm{CC}}(0,y)= 0$ for CC, and $f_{\textrm{PAV}}(x,y)=
\sum_{j=1}^{x} \frac{1}{j}$ if $x>0$ and $f_{\textrm{PAV}}(0,y)= 0$
for PAV.

For counting rules we have the following results with respect to
support monotonicity.

\begin{theorem}
  Every counting rule satisfies strong SMWPI.
  \label{th:one}
\end{theorem}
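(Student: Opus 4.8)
The plan is to exploit the additively separable structure of the counting rule's objective. Write $g = |G|$ and let $s^\ast = \max_{|W|=k} s_f(W, \mathcal{E})$ denote the optimal score in the original election, so that the winning committees in $R(\mathcal{E})$ are exactly those attaining $s^\ast$. Passing from $\mathcal{E}$ to $\mathcal{E}_{\Delta G}$ adds a single voter $n+1$ whose ballot is $G$, and whose contribution to the score of a committee $W$ is $f(|W \cap G|, g)$. Hence for every committee $W$ of size $k$ we have the clean identity
\begin{displaymath}
s_f(W, \mathcal{E}_{\Delta G}) = s_f(W, \mathcal{E}) + f(|W \cap G|, g).
\end{displaymath}
Since $|W \cap G| \le g$ always, and $f$ is non-decreasing in its first argument, the added term satisfies $f(|W \cap G|, g) \le f(g, g)$, with equality whenever $G \subseteq W$ (as then $|W \cap G| = g$). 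This is the single structural fact that drives both parts.

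For part (i), fix a winning committee $W \in R(\mathcal{E})$ with $G \subseteq W$. Its original score is $s_f(W, \mathcal{E}) = s^\ast$ and its added term is the maximal value $f(g,g)$, so $s_f(W, \mathcal{E}_{\Delta G}) = s^\ast + f(g,g)$. Any competitor $W'$ satisfies $s_f(W', \mathcal{E}) \le s^\ast$ and $f(|W' \cap G|, g) \le f(g,g)$, hence $s_f(W', \mathcal{E}_{\Delta G}) \le s^\ast + f(g,g)$. Thus $W$ still attains the maximum in $\mathcal{E}_{\Delta G}$ and remains winning, which even yields the conclusion with the \emph{same} committee $W$.

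For part (ii), assume every $W \in R(\mathcal{E})$ contains $G$, and let $W'$ be any committee with $G \not\subseteq W'$. The delicate point — and the only place the argument is not completely mechanical — is that $f$ is merely non-decreasing, not strictly increasing, so one cannot conclude that $W'$ receives a strictly smaller added term than a $G$-containing committee. Instead I would argue through the base scores: since $W'$ does not contain $G$, it is not among the winners of $\mathcal{E}$ by hypothesis, so $s_f(W', \mathcal{E}) < s^\ast$ strictly. Combining this with $f(|W' \cap G|, g) \le f(g, g)$ gives $s_f(W', \mathcal{E}_{\Delta G}) < s^\ast + f(g,g) = s_f(W, \mathcal{E}_{\Delta G})$ for any $G$-containing winner $W$ of $\mathcal{E}$. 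Hence no committee omitting $G$ can be optimal in $\mathcal{E}_{\Delta G}$, so every winner of $\mathcal{E}_{\Delta G}$ contains $G$, as required.
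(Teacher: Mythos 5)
Your proof is correct and follows essentially the same route as the paper's: decompose $s_f(W,\mathcal{E}_{\Delta G})=s_f(W,\mathcal{E})+f(|W\cap G|,|G|)$ and observe that the new voter's contribution is maximized exactly by committees containing $G$. The only difference is that you spell out part (ii) of the axiom (via the strict inequality $s_f(W',\mathcal{E})<s^\ast$ for non-winners), which the paper dismisses as following ``immediately''; your version is the more complete write-up.
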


\begin{proof}[Proof]
Consider an election $\mathcal{E} =(\mathcal{A}, k)$, a counting
function $f$ and its associated rule $r_f$, a set of winners $W$
outputted by $r_f$ for election $\mathcal{E}$ and a non-empty subset
$G$ of $W$. We are going to prove that $W$ also belongs to
$r_f(\mathcal{E}_{\Delta G})$. The theorem follows from that
immediately.

Consider any other candidates subset $W'$ of size $k$. We simply have
to observe that the total score of $W$ for election
$\mathcal{E}_{\Delta G}$ under rule $r_f$ is $\sum_{i \in N}
f(|A_i \cap W|, |A_i|) + f(|G \cap W|,|G|)$, that $\sum_{i \in N}
f(|A_i \cap W|, |A_i|) \geq \sum_{i \in N} f(|A_i \cap W'|, |A_i|)$
(because $W$ is a set of winners for rule $r_f$ and election
$\mathcal{E}$), and that $f(|G \cap W|,|G|) = f(|G|,|G|) \geq f(|G \cap
W'|,|G|)$ (by the definition of counting function).
\end{proof}

We can also prove weak SMWOPI by introducing a slight restriction to
the counting functions that is satisfied by all the counting rules
that we consider in this paper.

\begin{theorem}
\label{th:two}  
Consider a counting function $f$. If $f$ satisfies that $f(x,y) \geq
f(x,y')$ whenever $y \leq y'$, and that for each positive integer $z$
it holds that $f(x+z,y+z) \geq f(x,y)$, then its associated rule
$r_f$ satisfies weak SMWOPI. 
\end{theorem}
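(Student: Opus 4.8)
The plan is to track how the total score of a fixed committee changes when passing from $\mathcal{E}$ to $\mathcal{E}_{i+G}$, and to show that this change has a definite sign according to whether the committee contains $G$ or is disjoint from it. Since only voter $i$'s ballot changes (from $A_i$ to $A_i \cup G$, with $A_i \cap G = \emptyset$), the score difference is concentrated in voter $i$'s term. First I would define $\Delta(W) = s_f(W, \mathcal{E}_{i+G}) - s_f(W, \mathcal{E}) = f(|A_i \cap W| + |G \cap W|, |A_i| + |G|) - f(|A_i \cap W|, |A_i|)$, using that $|(A_i \cup G) \cap W| = |A_i \cap W| + |G \cap W|$ and $|A_i \cup G| = |A_i| + |G|$ because $A_i$ and $G$ are disjoint.

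The key step is to read off the sign of $\Delta(W)$ from the two hypotheses on $f$. If $G \subseteq W$ then $|G \cap W| = |G|$, and applying the second hypothesis with $z = |G|$, $x = |A_i \cap W|$, $y = |A_i|$ gives $f(|A_i \cap W| + |G|, |A_i| + |G|) \geq f(|A_i \cap W|, |A_i|)$, so $\Delta(W) \geq 0$. If instead $G \cap W = \emptyset$ then $|G \cap W| = 0$, and the first hypothesis (non-increasing in the second argument, applied with $|A_i| \leq |A_i| + |G|$) gives $\Delta(W) \leq 0$. Committees that only partially meet $G$ are left uncontrolled, but this is harmless because they already intersect $G$ and so cannot threaten the conclusion.

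With this in hand, part (i) follows by a short chain of inequalities. Taking a winner $W$ of $\mathcal{E}$ with $G \subseteq W$, for every committee $W'$ disjoint from $G$ I would write $s_f(W, \mathcal{E}_{i+G}) \geq s_f(W, \mathcal{E}) \geq s_f(W', \mathcal{E}) \geq s_f(W', \mathcal{E}_{i+G})$, where the outer inequalities use the signs of $\Delta$ just established and the middle one uses that $W$ wins in $\mathcal{E}$. Hence no $G$-disjoint committee strictly beats $W$ in $\mathcal{E}_{i+G}$, so a score-maximizer of $\mathcal{E}_{i+G}$ is either $W$ itself or a committee of strictly larger score, and the latter cannot be $G$-disjoint; in either case some winner of $\mathcal{E}_{i+G}$ meets $G$.

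For part (ii) I would argue by contradiction: suppose some winner $W'$ of $\mathcal{E}_{i+G}$ has $G \cap W' = \emptyset$, and let $W$ be any winner of $\mathcal{E}$, so that $G \subseteq W$ by hypothesis. The same chain $s_f(W, \mathcal{E}_{i+G}) \geq s_f(W, \mathcal{E}) \geq s_f(W', \mathcal{E}) \geq s_f(W', \mathcal{E}_{i+G})$ now closes on itself, since $s_f(W', \mathcal{E}_{i+G})$ is the maximum score in $\mathcal{E}_{i+G}$ and is therefore at least $s_f(W, \mathcal{E}_{i+G})$. All inequalities collapse to equalities; in particular $s_f(W', \mathcal{E}) = s_f(W, \mathcal{E})$, so $W'$ is also a winner of $\mathcal{E}$, contradicting that every winner of $\mathcal{E}$ contains $G$. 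I expect the main delicacy to be exactly this tightening of the inequality chain into equalities in the ``for all'' case, together with checking throughout that the two hypotheses on $f$ are invoked with the correct monotonicity direction.
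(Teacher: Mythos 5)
Your proposal is correct and follows essentially the same route as the paper: both isolate the score change to voter $i$'s term, use the hypothesis $f(x+z,y+z)\geq f(x,y)$ to show the score of a $G$-containing winner does not decrease, and use $f(x,y)\geq f(x,y')$ for $y\leq y'$ to show the score of any $G$-disjoint committee does not increase. The only differences are presentational (you arrange the comparison as a chain of three inequalities and prove part (ii) by contradiction, where the paper writes a single displayed difference and notes the middle inequality becomes strict), so nothing further is needed.
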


\begin{proof}[Proof]
Consider an election $\mathcal{E} =(\mathcal{A}, k)$, a counting
function $f$ and its associated rule $r_f$, a set of winners $W$
outputted by $r_f$ for election $\mathcal{E}$, a non-empty subset $G$
of $W$, and a voter $i$ such that she does not approve of the
candidates in $G$.

We observe first that because $A_i$ and $G$ are disjoint, for each
candidates subset $W'$ it holds that $f(|(A_{i} \cup G) \cap W'|,
|A_{i} \cup G|)= f(|A_{i} \cap W'| + |G \cap W'|, |A_{i}| + |G|)$ and
that $s_f(W',\mathcal{E}_{i+G})-s_f(W',\mathcal{E})= f(|A_{i} \cap W'|
+ |G \cap W'|, |A_{i}| + |G|) - f(|A_{i} \cap W'|, |A_{i}|)$.

Suppose that $f$ satisfies that $f(x,y) \geq f(x,y')$ whenever
$y \leq y'$, and that for each positive integer $z$ it holds that
$f(x+z,y+z) \geq f(x,y)$, and consider any candidates subset $W'$ of
size $k$ such that $W' \cap G = \emptyset$. Then,

\begin{displaymath}
\begin{array}{l}  
  s_f(W,\mathcal{E}_{i+G})-s_f(W',\mathcal{E}_{i+G})= \\
\ \ (s_f(W,\mathcal{E})+ f(|A_{i} \cap W|
+ |G \cap W|, |A_{i}| + |G|) \\
\ \ - f(|A_{i} \cap W|, |A_{i}|)) \\ 
\ \ - (s_f(W',\mathcal{E})+ f(|A_{i} \cap W'|
+ |G \cap W'|, |A_{i}| + |G|) \\
\ \ - f(|A_{i} \cap W'|, |A_{i}|)) \\  
\ \ = s_f(W,\mathcal{E})-s_f(W',\mathcal{E}) \\
\ \ + f(|(A_{i} \cap W| + |G|,
|A_{i}| + |G|) - f(|(A_{i} \cap W|, |A_{i}|) \\
\ \ - (f(|A_{i} \cap W'|,
|A_{i}| + |G|) - f(|A_{i} \cap W'|, |A_{i}|)) \geq 0.
\end{array}
\end{displaymath}

This proves part (i) of the definition of weak SMWOPI. The proof of
part (ii) of the definition of weak SMWOPI follows from the fact that
if $G \subseteq W$ for all $W \in r_f(\mathcal{E})$, then
$s_f(W,\mathcal{E})-s_f(W',\mathcal{E}) > 0$, and therefore, the
inequality in the equation above is strict.

\end{proof}

However, of the counting rules that we consider in this paper only AV
and SAV satisfy strong SMWOPI.

\begin{theorem}
\label{th:three}  
AV and SAV satisfy strong SMWOPI.
\end{theorem}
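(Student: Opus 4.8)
The plan is to exploit the fact that AV and SAV are \emph{separable}: each assigns every candidate an individual score --- $\sigma(c) = |\{j : c \in A_j\}|$ for AV and $\sigma(c) = \sum_{j : c \in A_j} \tfrac{1}{|A_j|}$ for SAV --- and the total score $s_f(W,\mathcal{E})$ of a committee is just $\sum_{c \in W}\sigma(c)$. Hence the winning committees are exactly the top-$k$ sets by $\sigma$. First I would record the two standard consequences of separability (taking ties into account, which is why the thresholds differ): a set $G$ with $|G|\le k$ is contained in \emph{some} winning committee iff for every $g \in G$ we have $|\{c \in C\setminus G : \sigma(c) > \sigma(g)\}| \le k - |G|$, and a candidate $g$ lies in \emph{all} winning committees iff $|\{c \ne g : \sigma(c) \ge \sigma(g)\}| \le k-1$.

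Next I would compute how $\sigma$ changes under $\mathcal{E}_{i+G}$, writing $\sigma'$ for the new scores. For AV this is trivial: $\sigma'(c) = \sigma(c)+1$ for $c \in G$ and $\sigma'(c)=\sigma(c)$ otherwise. For SAV the new ballot $A_i \cup G$ has size $|A_i|+|G|$, so $\sigma'(c) = \sigma(c) + \tfrac{1}{|A_i|+|G|}$ for $c \in G$, whereas $\sigma'(c) = \sigma(c) - \bigl(\tfrac{1}{|A_i|} - \tfrac{1}{|A_i|+|G|}\bigr) < \sigma(c)$ for $c \in A_i$, and $\sigma'(c)=\sigma(c)$ for all remaining candidates. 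The two invariants I actually need are: for $g \in G$ and $c \notin G$, $\sigma'(g) - \sigma'(c) \ge \sigma(g) - \sigma(c)$ (candidates in $G$ improve relative to every candidate outside $G$, since $G$-scores only rise and non-$G$-scores only fall or stay); and for $g, g' \in G$, $\sigma'(g) - \sigma'(g') = \sigma(g) - \sigma(g')$ (the common additive boost preserves the order within $G$).

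From these invariants the two count-monotonicities follow. For each $g \in G$, the first invariant gives $\{c \in C\setminus G : \sigma'(c) > \sigma'(g)\} \subseteq \{c \in C\setminus G : \sigma(c) > \sigma(g)\}$; combining the first and second invariants and splitting the index set into $C\setminus G$ (which can only shrink) and $G\setminus\{g\}$ (which is preserved) gives $\{c \ne g : \sigma'(c) \ge \sigma'(g)\} \subseteq \{c \ne g : \sigma(c) \ge \sigma(g)\}$. Part (i) then follows by feeding the first inclusion into the ``some winning committee'' characterization: the hypothesis $G \subseteq W$ bounds each count by $k-|G|$ in $\mathcal{E}$, and the inclusion transfers the bound to $\mathcal{E}_{i+G}$. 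Part (ii) follows by feeding the second inclusion, applied to each $g \in G$, into the ``all winning committees'' characterization.

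The main obstacle is SAV, and specifically the subtlety that distinguishes it from AV: because the denominator $|A_i|$ grows to $|A_i|+|G|$, the candidates that voter $i$ \emph{already} approved strictly lose score. Consequently one cannot simply argue that the original committee $W$ stays winning (as one can for AV); a member of $W\setminus G$ may be overtaken by an outside candidate whose score is unchanged. The resolution is to abandon $W$ and instead track only the \emph{relative} ordering of each $g \in G$ against the other candidates, which is exactly what the two invariants encode. A secondary point requiring care is the tie handling --- getting the strict-versus-weak inequality right in the two separable characterizations so that the ``some'' and ``all'' clauses of strong SMWOPI are each matched by the correct count bound.
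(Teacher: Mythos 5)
Your proposal is correct and follows essentially the same route as the paper: both rest on the separability $f(x,y)=xf(1,y)$, the resulting per-candidate scores with winners being the top-$k$ sets, and the observation that every candidate in $G$ gains the same positive amount $f(1,|A_i|+|G|)$ while no candidate outside $G$ gains anything. Your two count characterizations and the relative-ordering invariants simply make explicit the tie-handling that the paper's shorter argument leaves implicit.
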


\begin{proof}[Proof]
The counting functions of AV and SAV hold that $f(x,y)= xf(1,y)$. This
makes it possible to assign each candidate $c$ a score
$s_f(c,\mathcal{E})= \sum_{i: c \in A_i} f(1,|A_i|)$ irrespective of
which other candidates are in the set of winners $W$ so that
$s_f(W,\mathcal{E})= \sum_{c \in W} s_f(c,\mathcal{E})$. Therefore,
the winners in AV and SAV are the $k$ candidates with the highest
candidate score. It is now enough to observe that each candidate that
belongs to $G$ increases her score in election $\mathcal{E}_{i+G}$ in
$f(1,|A_i|+|G|)$ with respect to her score in election $\mathcal{E}$,
and that the scores of the candidates that are not in $G$ do not
increase.
\end{proof}

The following examples prove that PAV and CC fail strong SMWOPI.

\begin{example}
Let $k=4$ and $C=\{c_1, \ldots, c_7\}$. $131$ voters cast the
following ballots: for $i,j= 1$ to $3$, $3$ voters approve of
$\{c_i,c_{j+4}\}$, $100$ voters approve of $\{c_4\}$, $1$ voter
approves of $\{c_1, c_2\}$, $1$ voter approves of $\{c_1, c_2, c_3\}$,
and $2$ voters approve of $\{c_5,c_6\}$.  For this election PAV
outputs one set of winners: $\{c_1, c_2, c_3, c_4\}$, with a PAV score
of $391/3$. However, if the voter that approves of $\{c_1, c_2\}$
decides to approve of $\{c_1, c_2, c_3, c_4\}$, then PAV outputs only
$\{c_4, c_5, c_6, c_7\}$, with a PAV score of $131$. Intuitively, this
example works as follows. First, the $100$ voters that approve of
$\{c_4\}$ force that $c_4$ has to be in the set of winners. Second,
the first $27$ votes force that either $\{c_1, c_2, c_3\}$ or $\{c_5,
c_6, c_7\}$ are in the set of winners. The last $4$ votes break the
tie between $\{c_1, c_2, c_3, c_4\}$ and $\{c_4, c_5, c_6, c_7\}$ in
the two cases considered.
\end{example}

\begin{example}
Let $k=3$ and $C=\{a, b, c, d, e\}$. $13$ voters cast the following
ballots: $2$ voters approve of $\{a, d\}$, $2$ voters approve of $\{a,
e\}$, $2$ voters approve of $\{c, d\}$, $2$ voters approve of $\{c,
e\}$, $2$ voters approve of $\{b\}$, $2$ voters approve of $\{a\}$,
and $1$ voter approves of $\{d\}$. For this election CC outputs one
set of winners: $\{a, b, c\}$ (one voter misrepresented). Now, we
consider two consecutive increases of support of $\{b, c\}$, where, in
each increase one of the voters that approve of $\{a\}$ decides to
approve of $\{a, b, c\}$. Then, after the first increase of support of
$\{b, c\}$, CC outputs $\{a, b, c\}$ and $\{b, d, e\}$ (one voter
misrepresented), and after the second increase of support of $\{b,
c\}$ CC outputs only $\{b, d, e\}$ ($0$ voters
misrepresented). Observe that this example proves that CC fails strong
SMWOPI even when it is combined with any tie breaking rule, because if
the tie breaking rule selects $\{a, b, c\}$ after the first increase
of support, then strong SMWOPI is violated in the second increase of
support, and if the tie breaking rule selects $\{b, d, e\}$ after the
first increase of support, then strong SMWOPI is violated in the first
increase of support.
\end{example}

Let us now turn to analyze the remaining voting rules. 

\begin{theorem}
\label{th:mav}MAV satisfies strong SMWPI and weak SMWOPI.  
\end{theorem}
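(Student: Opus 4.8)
The plan is to work directly with the MAV objective and the algebraic identity for the Hamming distance between a size-$k$ committee and an arbitrary ballot. The computational fact I would establish first is that for any committee $W'$ with $|W'|=k$ and any ballot $B$ one has $d(W',B)=k+|B|-2|W'\cap B|$; in particular $d(W',G)=k+|G|-2|W'\cap G|\ge k-|G|$, with equality exactly when $G\subseteq W'$. Throughout I write $\mathrm{OPT}(\mathcal{E})$ for the optimal MAV value $\min_{W'}\max_{i\in N} d(W',A_i)$ and fix a winning committee $W^*$ of $\mathcal{E}$ with $G\subseteq W^*$.

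For strong SMWPI I would analyse $\mathcal{E}_{\Delta G}$, whose objective for a committee $W'$ is $\max\big(\max_{i\in N} d(W',A_i),\,d(W',G)\big)$. Since every committee satisfies $\max_i d(W',A_i)\ge \mathrm{OPT}(\mathcal{E})$ and $d(W',G)\ge k-|G|$, the new objective is bounded below by $V:=\max(\mathrm{OPT}(\mathcal{E}),k-|G|)$; moreover $W^*$ attains this bound, so $W^*$ is winning in $\mathcal{E}_{\Delta G}$ and $\mathrm{OPT}(\mathcal{E}_{\Delta G})=V$, giving part (i). For part (ii) I would take an arbitrary winner $W'$ of $\mathcal{E}_{\Delta G}$, so $\max(\max_i d(W',A_i),d(W',G))=V$, and split on which term realises $V$: if $V=k-|G|$ then $d(W',G)\le k-|G|$ forces $G\subseteq W'$ by the equality case above, whereas if $V=\mathrm{OPT}(\mathcal{E})$ (the case $\mathrm{OPT}(\mathcal{E})>k-|G|$) then $\max_i d(W',A_i)\le V=\mathrm{OPT}(\mathcal{E})$ forces $W'$ to be a winner of $\mathcal{E}$, so $G\subseteq W'$ by hypothesis.

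For weak SMWOPI the crucial observation is how the distance to the modified voter $i$ changes: since $A_i\cap G=\emptyset$, the identity above gives $d(W',A_i\cup G)=d(W',A_i)+|G|-2|W'\cap G|$. Hence for a committee $W'$ disjoint from $G$ the distance to voter $i$ increases by $|G|$, so $\mathrm{obj}_{i+G}(W')\ge \mathrm{obj}(W')\ge \mathrm{OPT}(\mathcal{E})$, whereas for $W^*\supseteq G$ the distance to $i$ drops by $|G|$, giving $\mathrm{obj}_{i+G}(W^*)\le \mathrm{OPT}(\mathcal{E})$ and hence $\mathrm{OPT}(\mathcal{E}_{i+G})\le \mathrm{OPT}(\mathcal{E})$. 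For part (i) I argue by contradiction: if every winner of $\mathcal{E}_{i+G}$ were disjoint from $G$, picking one such winner $W'$ yields $\mathrm{OPT}(\mathcal{E}_{i+G})=\mathrm{obj}_{i+G}(W')\ge \mathrm{OPT}(\mathcal{E})$, which combined with $\mathrm{OPT}(\mathcal{E}_{i+G})\le \mathrm{OPT}(\mathcal{E})$ forces equality and makes $W^*$ a winner of $\mathcal{E}_{i+G}$ that intersects $G$, a contradiction. For part (ii), assuming every winner of $\mathcal{E}$ contains $G$, a hypothetical winner $W'$ of $\mathcal{E}_{i+G}$ disjoint from $G$ would again satisfy $\mathrm{OPT}(\mathcal{E}_{i+G})=\mathrm{OPT}(\mathcal{E})$ and, from $d(W',A_i)+|G|\le \mathrm{OPT}(\mathcal{E})$, would have original objective $\le \mathrm{OPT}(\mathcal{E})$, hence be a winner of $\mathcal{E}$ avoiding $G$, contradicting the hypothesis.

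The main obstacle I anticipate is the weak SMWOPI argument, where $\mathrm{OPT}(\mathcal{E}_{i+G})$ can be strictly smaller than $\mathrm{OPT}(\mathcal{E})$ (shrinking voter $i$'s distance may slacken the binding constraint), so one cannot simply compare objective values committee-by-committee. The asymmetry that rescues the argument, namely that adding $G$ to voter $i$'s ballot can only raise the distance to $i$ for $G$-avoiding committees while it lowers it for committees containing $G$, is what lets the contradiction arguments go through cleanly for both the ``some'' and the ``all'' versions.
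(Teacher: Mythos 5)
Your proof is correct and follows essentially the same route as the paper's: the same observations that $d(W,G)=k-|G|$ when $G\subseteq W$ versus $d(W',G)\ge k-|G|+2$ otherwise, and that adding $G$ to voter $i$'s ballot shifts $d(\cdot,A_i)$ by $-|G|$ for committees containing $G$ and by $+|G|$ for committees disjoint from $G$. Your treatment is in fact somewhat more explicit than the paper's on the ``all winning committees'' clauses (part (ii) of each axiom), which the paper leaves implicit.
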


\begin{proof}[Proof]
Consider first an election $\mathcal{E} =(\mathcal{A}, k)$, a set of
winners $W$ outputted by $\textrm{MAV}$ for election $\mathcal{E}$,
and a non-empty subset $G$ of $W$. Since $G \subseteq W$, we have
$d(W,G)= |W \setminus G| + |G \setminus W|= k - |G|$. For each
candidate set $W'$ of size $k$ such that a candidate $c$ exists that
belongs to $G$ but not to $W'$, we have $d(W',G)= |W' \setminus G| +
|G \setminus W'| \geq (k - |G| + 1) + 1= k - |G| + 2$. Therefore,
$\max \{\max_{i \in N} d(W, A_i), d(W,G)\}$ has to be less than or
equal to\\ $\max \{\max_{i \in N} d(W', A_i), d(W',G)\}$. This proves
that MAV satisfies strong population monotonicity with population
increase.

Consider now an election $\mathcal{E} =(\mathcal{A}, k)$, a set of
winners $W$ outputted by $\textrm{MAV}$ for election $\mathcal{E}$, a
non-empty subset $G$ of $W$, and a voter $i$ that does not aprove of
any of the candidates in $G$. We observe first that $(A_i \cup
G) \setminus W= A_i \setminus W$, and that $|W \setminus (A_i \cup
G)|= |W \setminus A_i| - |G|$, and therefore, $d(W, A_i \cup G)= d(W,
A_i) - |G|$. For each candidates set $W'$ of size $k$ such that
$W' \cap G= \emptyset$, we have $d(W',A_i \cup G)= |W' \setminus
(A_i \cup G)| + |(A_i \cup G) \setminus W'| = |W' \setminus A_i| +
|A_i \setminus W'| + |G|= d(W',A_i) + |G|$. It follows immediately
that for each candidates set $W'$ of size $k$ such that $W' \cap
G= \emptyset$, the maximum Hamming distance between $W'$ and the
voters in election $\mathcal{E}_{i+G}$ does not decrease with respect
to the maximum Hamming distance between $W'$ and the voters in
election $\mathcal{E}$, and therefore, that $W$ or another set of
candidates that includes some of the candidates in $G$ must be output
by MAV for election $\mathcal{E}_{i+G}$.
\end{proof}

However, the following example shows that MAV fails strong SMWOPI.

\begin{example}
Let $k=5$ and $C= \{c_1, \ldots, c_7\}$. $9$ voters cast the following
ballots: $1$ voter approves of $\{c_1\}$, $1$ voter approves of
$\{c_2\}$, $1$ voter approves of $\{c_3\}$, $1$ voter approves of
$\{c_5\}$, for $i,j= 0$ to $1$, $1$ voter approves of $\{c_1, c_{4+i},
c_{6+j}\}$, and $1$ voter approves of $\{c_1, c_6\}$. For this
election the only set of winners output by MAV is $\{c_1, c_2, c_3,
c_4, c_5\}$. The Hamming distance between $\{c_1, c_2, c_3, c_4,
c_5\}$ and the ballot profile of each voter is always less than or
equal to $5$ (in particular, the Hamming distance between $\{c_1, c_2,
c_3, c_4, c_5\}$ and $\{c_1, c_6\}$ is $5$). We show now that for any
other candidate subset of size $5$ there is a ballot profile with
distance $6$ to such candidate subset. First, for each $j= 1, 2, 3$,
and for each candidate subset $W$ of size $5$ such that $c_j$ does not
belong to $W$, the Hamming distance between $W$ and $\{c_j\}$ is
$6$. There exist $6$ candidates subsets of size $5$ that contain
$c_1$, $c_2$, and $c_3$ (one of them is $\{c_1, c_2, c_3, c_4,
c_5\}$). For $i,j= 0$ to $1$, the Hamming distance between $\{c_1,
c_2, c_3, c_{4+i}, c_{6+j}\}$ and $\{c_1, c_{5-i}, c_{7-j}\}$ is
$6$. The only remaining candidates subset is $\{c_1, c_2, c_3, c_6,
c_7\}$, that has a Hamming distance with $\{c_5\}$ of $6$. Observe
that the Hamming distance between $\{c_1, c_2, c_3, c_6, c_7\}$ and
all the other ballot profiles is always less than or equal to
$4$. Now, if the voter that approves of $\{c_5\}$ decides to approve
of $\{c_1, c_2, c_3, c_4, c_5\}$, then the Hamming distance between
the ballot profile of such voter and $\{c_1, c_2, c_3, c_6, c_7\}$
falls to $4$, and therefore, in that case MAV would output only
$\{c_1, c_2, c_3, c_6, c_7\}$.
\end{example}

\begin{theorem}
\label{th:four}  
The Monroe rule satisfies weak SMWOPI.
\end{theorem}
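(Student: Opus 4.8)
The plan is to work directly with the Monroe misrepresentation score. For a committee $W'$ of size $k$ and an election $\mathcal{E}$, write $\mu(W', \mathcal{E}) = \min_{\pi \in M_{N,W'}} |\{i : \pi(i) \notin A_i\}|$, so that the Monroe winners are exactly the committees minimizing $\mu(\cdot, \mathcal{E})$. The crucial structural point is that passing from $\mathcal{E}$ to $\mathcal{E}_{i+G}$ changes neither $n$ nor $k$, so the feasible assignment set $M_{N,W'}$ is identical in both elections for every $W'$; only the ballot of voter $i$ grows from $A_i$ to $A_i \cup G$.

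First I would record two monotonicity facts about $\mu$. Fact (a): if $W' \cap G = \emptyset$, then for every $\pi \in M_{N,W'}$ the value $\pi(i) \in W'$ satisfies $\pi(i) \notin G$, whence $\pi(i) \in A_i \cup G \iff \pi(i) \in A_i$; since no other voter changes, the misrepresentation count of each $\pi$ is identical in the two elections, so $\mu(W', \mathcal{E}_{i+G}) = \mu(W', \mathcal{E})$. Fact (b): for an arbitrary committee the score can only weakly decrease, because for any fixed $\pi$ the set of misrepresented voters in $\mathcal{E}_{i+G}$ is contained in that for $\mathcal{E}$ (enlarging $A_i$ can only turn voter $i$ from misrepresented to represented); minimizing over the common set $M_{N,W'}$ gives $\mu(W', \mathcal{E}_{i+G}) \leq \mu(W', \mathcal{E})$.

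Next, let $m$ and $m'$ be the optimal misrepresentation values for $\mathcal{E}$ and $\mathcal{E}_{i+G}$. Fix a Monroe winner $W$ of $\mathcal{E}$ with $G \subseteq W$, which exists under the hypothesis of either part since $R(\mathcal{E})$ is non-empty; fact (b) then yields $m' \leq \mu(W, \mathcal{E}_{i+G}) \leq \mu(W, \mathcal{E}) = m$. I would then split into two cases. If $m' < m$, every Monroe winner $W^*$ of $\mathcal{E}_{i+G}$ must meet $G$: otherwise fact (a) would give $\mu(W^*, \mathcal{E}) = \mu(W^*, \mathcal{E}_{i+G}) = m' < m$, contradicting optimality of $m$; this forces \emph{all} winners of $\mathcal{E}_{i+G}$ to intersect $G$, delivering both the weak ``some'' and the weak ``all'' conclusions simultaneously. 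If $m' = m$, then $\mu(W, \mathcal{E}_{i+G}) = m'$, so $W$ is itself a Monroe winner of $\mathcal{E}_{i+G}$ and $G \subseteq W$ gives $G \cap W \ne \emptyset$, settling part (i); for part (ii) in this case I argue by contradiction, since a winner $W'$ of $\mathcal{E}_{i+G}$ with $W' \cap G = \emptyset$ would satisfy $\mu(W', \mathcal{E}) = \mu(W', \mathcal{E}_{i+G}) = m$ by fact (a), making $W'$ a Monroe winner of $\mathcal{E}$ that omits the non-empty $G$, contrary to the part (ii) assumption that every winner of $\mathcal{E}$ contains $G$.

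Once facts (a) and (b) are established the argument is essentially bookkeeping, and I do not anticipate a serious obstacle. The only point requiring care is keeping the ``some''/``all'' quantifiers matched to the two cases: the strict case $m' < m$ automatically forces every winner to meet $G$, whereas the genuinely delicate direction is part (ii) in the boundary case $m' = m$, which is precisely where the stronger hypothesis (that all winners of $\mathcal{E}$ contain $G$) must be invoked.
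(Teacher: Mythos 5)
Your proposal is correct and follows essentially the same route as the paper: your facts (a) and (b) are precisely the paper's two observations (the misrepresentation of any committee disjoint from $G$ is unchanged in $\mathcal{E}_{i+G}$, while that of $W$ can only weakly decrease), and your subsequent case analysis on $m'$ versus $m$ just makes explicit the bookkeeping that the paper leaves implicit. No gap.
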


\begin{proof}[Proof]
Consider an election $\mathcal{E} =(\mathcal{A}, k)$, a set of winners
$W$ outputted by $\textrm{Monroe}$ for election $\mathcal{E}$, a
non-empty subset $G$ of $W$, and a voter $i$ that does not approve of
any of the candidates in $G$. Let $\pi_W$ be a mapping that minimizes
the misrepresentation of $W$ for election $\mathcal{E}$. Clearly the
misrepresentation of $W$ with mapping $\pi_W$ for election
$\mathcal{E}_{i+G}$ is the same as for election $\mathcal{E}$ if the
candidate $\pi_W(i)$ assigned by $\pi_W$ to voter $i$ does not belong
to $G$ and is equal to the misrepresentation of $W$ with mapping
$\pi_W$ for election $\mathcal{E}$ minus one if $\pi_W(i)$ belongs to
$G$. Furthermore, for each candidates set $V$ such that $V \cap G=
\emptyset$, and for each mapping $\pi_{V}$ of the voters in $N$ to the
candidates in $V$ it holds that the candidate $\pi_{V}(i)$ assigned by
$\pi_V$ to voter $i$ belongs to $A_i \cup G$ if and only if such
candidate belongs to $A_i$ and, therefore, the misrepresentation
values of $V$ with mapping $\pi_{V}$ are the same for election
$\mathcal{E}_{i+G}$ and for election $\mathcal{E}$. 
\end{proof}

Examples~\ref{ex:monWSMWPI} and~\ref{ex:monSSMWOPI} prove that Monroe
fails weak SMWPI and strong SMWOPI, respectively. As in the case of
CC, these examples prove that Monroe fails weak SMWPI and strong
SMWOPI even if combined with any tie breaking rule.

\begin{example}
\label{ex:monWSMWPI}
Let $k=4$ and $C= \{a, b, c, d, e, f, g, h\}$. $33$ voters cast the
following ballots: $5$ voters approve of $\{a, e\}$, $4$ voters
approve of $\{a, g\}$, $5$ voters approve of $\{b, e\}$, $4$ voters
approve of $\{b, h\}$, $5$ voters approve of $\{c, f\}$, $4$ voters
approve of $\{c, g\}$, $3$ voters approve of $\{d, f\}$, and $3$
voters approve of $\{d, h\}$. For this election Monroe outputs only
$\{e, f, g, h\}$ (misrepresentation $1$ due to one of the voters that
approve of $e$ being represented by $h$). We now consider two
consecutive voters that enter the election, such that each of the new
voters approves of $\{e\}$. Then, after the first new voter enters the
election, Monroe outputs $\{a, b, c, d\}$ and $\{e, f, g, h\}$
(misrepresentation $2$) and, after the second new voter enters the
election, Monroe outputs only $\{a, b, c, d\}$ (misrepresentation $2$:
the new voters would be represented by candidate $d$).
\end{example}

\begin{example}
\label{ex:monSSMWOPI}
Let $k=3$ and $C= \{a, b, c, d, e\}$. $18$ voters cast the following
ballots: $2$ voters approve of $\{a\}$, $2$ voters approve of $\{a,
d\}$, $2$ voters approve of $\{a, e\}$, $4$ voters approve of $\{b\}$,
$1$ voter approves of $\{b, e\}$, $4$ voters approve of $\{c, d\}$,
and $3$ voters approve of $\{c, e\}$. For this election Monroe outputs
only $\{a, b, c\}$ (misrepresentation $1$ due to one of the voters
that approve of $c$ being represented by candidate $b$). Now, we
consider two consecutive increases of support of $\{b, c\}$, where, in
each increase one of the voters that approve of $\{a\}$ decides to
approve of $\{a, b, c\}$. Then, after the first increase of support of
$\{b, c\}$, Monroe outputs $\{a, b, c\}$ and $\{b, d, e\}$
(misrepresentation $1$), and after the second increase of support of
$\{b, c\}$ Monroe outputs only $\{b, d, e\}$ (misrepresentation $0$).
\end{example}

The results for SeqPAV and seq-Phragm\'en are studied together,
although for seq-Phragm\'en we need first an intermediate lemma.

\begin{lemma}
\label{lem:seqPh1}
Consider an election $\mathcal{E} =(\mathcal{A}, k)$, a set of winners
$W$ outputted by $\textrm{seq-Phragm\'en}$ for election $\mathcal{E}$,
a non-empty subset $G$ of $W$, and a voter $i$ that does not approve
of any of the candidates in $G$. Let $h$ be the first iteration in
which a candidate that belongs to $G$ is added to the set of winners
by $\textrm{seq-Phragm\'en}$, and let $c_h$ be such candidate. Then,
it holds that $s_{c_h}^{(h)} \geq \frac{1 + x_{i}^{(h-1)} + \sum_{r:
c_h \in A_{r}} x_{r}^{(h-1)}}{1 + |\{r: c_h \in A_{r}\}|}$.
\end{lemma}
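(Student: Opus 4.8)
The plan is to first reduce the claimed inequality to a single transparent comparison, and then to obtain that comparison from a monotonicity property of seq-Phragm\'en. By definition of the rule, $s_{c_h}^{(h)} = \frac{1 + \sum_{r: c_h \in A_r} x_{r}^{(h-1)}}{|\{r: c_h \in A_r\}|}$. Writing $Q = |\{r: c_h \in A_r\}|$ and $S = \sum_{r: c_h \in A_r} x_{r}^{(h-1)}$, the assertion is $\frac{1+S}{Q} \ge \frac{1 + x_{i}^{(h-1)} + S}{1+Q}$. Since $c_h$ is elected it has at least one approver, so $Q \ge 1$ and both denominators are positive; cross-multiplying and cancelling the common term $QS$ leaves $1 + S \ge Q\, x_{i}^{(h-1)}$. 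As $1 + S = Q\, s_{c_h}^{(h)}$, this is equivalent to $s_{c_h}^{(h)} \ge x_{i}^{(h-1)}$. Thus everything reduces to showing that the load at which $c_h$ is elected is at least the load carried by voter $i$ just before that iteration.

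The key step (and the main obstacle) is a monotonicity invariant for seq-Phragm\'en, which I would prove by induction on the iteration count. Writing $t_j := s_{w_j}^{(j)}$ for the load at which the $j$-th winner $w_j$ is elected, I claim that (a) $t_{j-1} \le t_j$ for all $j$, and (b) $x_{r}^{(j)} \le t_j$ for every voter $r$. The two parts are proved together. First, individual voter loads are non-decreasing in $j$: a voter's load changes only when one of her approved candidates is elected, say $w_m$, in which case it is reset to $t_m$, and by (b) at iteration $m-1$ together with (a) her previous load was at most $t_{m-1} \le t_m$. Granting that voter loads never decrease, for any candidate $c$ not yet elected the quantity $s_{c}^{(\cdot)}$ has constant denominator $|\{r : c \in A_r\}|$ and a numerator $1 + \sum_{r: c \in A_r} x_{r}^{(\cdot)}$ that does not decrease, so $s_{c}^{(j)} \ge s_{c}^{(j-1)} \ge t_{j-1}$, the last inequality because $w_{j-1}$ attained the minimum at iteration $j-1$; minimising over the candidates still available at iteration $j$ yields (a). Finally (b) holds because at iteration $j$ the supporters of $w_j$ receive load exactly $t_j$, while every other voter retains a load that was at most $t_{j-1} \le t_j$. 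The one delicate point is to order the simultaneous induction so that (a) and (b) are available at the appropriate smaller indices when each is used; with that care the argument goes through.

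With the invariant established the lemma is immediate. Since $c_h \in G$ and voter $i$ approves of no candidate in $G$, voter $i$ does not approve of $c_h$, so her load is unchanged at iteration $h$, giving $x_{i}^{(h-1)} = x_{i}^{(h)}$. Applying part (b) at iteration $h$ yields $x_{i}^{(h)} \le t_h = s_{c_h}^{(h)}$, hence $x_{i}^{(h-1)} \le s_{c_h}^{(h)}$, which is exactly the reduced inequality from the first paragraph. Everything apart from the monotonicity invariant is routine algebra and a direct appeal to the update rule of seq-Phragm\'en, so I expect the induction for (a) and (b) to be the only part requiring genuine care.
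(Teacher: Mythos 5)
Your proof is correct and follows essentially the same route as the paper: both reduce the claimed inequality to $s_{c_h}^{(h)} \geq x_i^{(h-1)}$ and obtain that from the monotonicity of the sequence of loads at which candidates are elected. The only difference is that the paper cites \citeauthor{brill:phragmen}~\shortcite{brill:phragmen} for the fact that $s^{(1)} \leq \ldots \leq s^{(k)}$ (together with the bound on voter loads), whereas you prove it from scratch by the standard simultaneous induction, which is a valid self-contained alternative.
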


\begin{proof}[Proof]
\citeauthor{brill:phragmen}~\shortcite{brill:phragmen} prove that for
each election $(\mathcal{A}, k)$, and for each $1 \leq j \leq k$, it
holds that $s^{(1)} \leq \ldots \leq s^{(k)}$, where $s^{(j)}$ is the
load $s_{c_j}^{(j)}$ of the candidate $c_j$ elected at iteration
$j$. Therefore, $s_{c_j}^{(j)} \geq x_r^{(j-1)}$ for each iteration
$j$ and each voter $r$. Thus, in the case of election $\mathcal{E}$
and iteration $h$ we have $s_{c_h}^{(h)}= \frac{1 + \sum_{r: c_h \in
    A_{r}} x_{r}^{(h-1)}}{|\{r: c_h \in A_{r}\}|} \geq \frac{1 +
  x_{i}^{(h-1)} + \sum_{r: c_h \in A_{r}} x_{r}^{(h-1)}}{1 + |\{r: c_h
  \in A_{r}\}|}$.
\end{proof}

The following theorem has already been proved by
\citeauthor{phragmen:pOpt}~\shortcite{phragmen:pOpt} and
\citeauthor{2016arXiv161108826J}~\shortcite{2016arXiv161108826J} for
seq-Phragm\'en in the case in which $|G|=1$. Our proof follows the
same ideas.

\begin{theorem}
  SeqPAV and seq-Phragm\'en satisfy weak SMWPI and weak SMWOPI.
  \label{th:seqpav}
\end{theorem}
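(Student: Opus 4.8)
The plan is to prove the four claims (weak SMWPI and weak SMWOPI, parts (i) and (ii)) for both rules, treating SeqPAV and seq-Phragm\'en in parallel but with distinct arguments. For each rule I would fix an election $\mathcal{E}=(\mathcal{A},k)$, a winning committee $W$ containing the non-empty subset $G$, and track the sequential construction to show that at the first iteration at which the algorithm \emph{would} add a candidate of $G$, it still does so (or has already picked one) after support is added. The key observation to exploit is that both rules are sequential and greedy, so it suffices to argue that adding support to the candidates in $G$ never \emph{delays} the selection of the first candidate of $G$ relative to the competitors that lie outside $G$; once one member of $G$ is guaranteed to enter the committee, the weak axiom is satisfied. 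For SeqPAV, where a candidate is chosen to \emph{maximize} the SeqPAV score $\sum_{i:c\in A_i}\frac{1}{1+|A_i\cap W_{j-1}|}$, I would show that adding a new voter (SMWPI) or adding $G$ to voter $i$'s ballot (SMWOPI) weakly increases the score of every candidate in $G$ at the relevant iteration and does not increase it for candidates outside $G$ by more than the members of $G$.

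For SeqPAV I would proceed as follows. Let $c_h\in G$ be the first member of $G$ that seq\-PAV adds in the original election, at iteration $h$, with partial committee $W_{h-1}$. Since $G\cap W_{h-1}=\emptyset$ (as $c_h$ is the first), no candidate of $G$ and none of the candidates chosen before $c_h$ depends on $G$-membership yet. In the modified election the extra support lands on candidates of $G$ only, so the SeqPAV scores of the non-$G$ candidates at iteration $h$ are unchanged (for SMWPI the new voter approves only $G$; for SMWOPI voter $i$ gains only $G$-candidates), while the score of $c_h$ weakly increases. Because $c_h$ was the argmax (up to ties) among all remaining candidates, it remains an argmax after the perturbation, so some member of $G$ is still selectable at iteration $h$. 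Running the algorithm with a tie-break that keeps $c_h$ then yields a committee meeting $G$, proving part (i); for part (ii), if \emph{every} winning committee contained $G$, then in particular every run already picks a member of $G$ by iteration $h$ regardless of tie-breaking, and the monotone score change cannot make a non-$G$ candidate strictly overtake, so the conclusion holds for all output committees.

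For seq-Phragm\'en I would invoke Lemma~\ref{lem:seqPh1}. Let $c_h\in G$ be the first $G$-candidate elected, at iteration $h$, in the original election. For every competitor $c\notin G$ the load $s_c^{(h)}$ computed by the algorithm is unaffected by the added support in the modified election, because $c$'s load depends only on the approvals of $c$ and the accumulated voter loads $x_r^{(h-1)}$ from previously elected candidates, none of which lie in $G$ up to iteration $h$. The lemma gives a bound on $s_{c_h}^{(h)}$ after the support increase: the right-hand side is exactly the new load value of $c_h$ once the extra approval (the new voter's, or voter $i$'s) is incorporated, namely $\frac{1+x_i^{(h-1)}+\sum_{r:c_h\in A_r}x_r^{(h-1)}}{1+|\{r:c_h\in A_r\}|}$. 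Since this quantity is at most the original $s_{c_h}^{(h)}$, which was minimal among all remaining candidates, the new load of $c_h$ is no larger than the loads of the competitors, so $c_h$ (or some member of $G$) is still among the minimizers and can be elected at iteration $h$. This establishes weak SMWPI and weak SMWOPI for seq-Phragm\'en, with part (ii) following because under the hypothesis that all committees contain $G$ no competitor can be strictly preferred.

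The main obstacle I anticipate is the bookkeeping around ties and the ``for all committees'' clause of part (ii): seq-Phragm\'en and SeqPAV are only weakly monotone, so I must be careful that a member of $G$ is \emph{among} the optimizers rather than the unique one, and that the perturbation cannot demote $c_h$ strictly below a non-$G$ competitor. The cleanest way around this is precisely the structural fact that, up to the first $G$-selection, the added support touches only $G$-candidates and strictly helps them (or leaves competitors unchanged), which is why Lemma~\ref{lem:seqPh1} is exactly the right tool for seq-Phragm\'en and why an elementary monotonicity-of-score argument suffices for SeqPAV.
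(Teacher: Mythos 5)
Your proposal is correct and follows essentially the same route as the paper's proof: both arguments track the greedy sequence up to the first iteration $h$ at which a member of $G$ is selected, observe that the added support leaves the scores/loads of all candidates outside $G$ unchanged while improving those of the candidates in $G$ (so the prefix of non-$G$ selections is preserved or a $G$-candidate enters even earlier), and invoke Lemma~\ref{lem:seqPh1} for the seq-Phragm\'en SMWOPI case. The only difference is cosmetic: the paper notes that the score increase (resp.\ load decrease) of $c_h$ at iteration $h$ is strict, which is what disposes of the tie-breaking worry you flag in your final paragraph.
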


\begin{proof}[Proof]
Consider an election $\mathcal{E} =(\mathcal{A}, k)$, a set of winners
$W$ outputted by $\textrm{SeqPAV}$ (respectively, by
$\textrm{seq-Phragm\'en}$) for election $\mathcal{E}$, a non-empty
subset $G$ of $W$, and a voter $i$ that does not aprove of any of the
candidates in $G$. Let $h$ be the first iteration in which a candidate
that belongs to $G$ is added to the set of winners by
$\textrm{SeqPAV}$ (respectively, by $\textrm{seq-Phragm\'en}$) and let
$c_h$ be such candidate. We observe first that while no candidate that
belongs to $G$ is added to the set of winners, the SeqPAV score (for
SeqPAV) and the load (for seq-Phragm\'en) of each candidate is the
same for elections $\mathcal{E}$, $\mathcal{E}_{\Delta G}$, and
$\mathcal{E}_{i+G}$. For each of $\mathcal{E}_{\Delta G}$ and
$\mathcal{E}_{i+G}$ there are therefore two possibilities: either a
candidate that belongs to $G$ is added to the set of winners by
$\textrm{SeqPAV}$ (respectively, by $\textrm{seq-Phragm\'en}$) in the
first $h-1$ iterations (in that case, the theorem holds) or the first
$h-1$ candidates added to the set of winners by $\textrm{SeqPAV}$
(respectively, by $\textrm{seq-Phragm\'en}$) for elections
$\mathcal{E}_{\Delta G}$ and $\mathcal{E}_{i+G}$ are the same (and
selected in the same order) as the first $h-1$ candidates added to the
set of winners by $\textrm{SeqPAV}$ (respectively, by
$\textrm{seq-Phragm\'en}$) for election $\mathcal{E}$. We now simply
observe that for $\mathcal{E}_{\Delta G}$ and $\mathcal{E}_{i+G}$ if
the first $h-1$ candidates added to the set of winners by
$\textrm{SeqPAV}$ (respectively, by $\textrm{seq-Phragm\'en}$) are the
same and in the same order as those added for election $\mathcal{E}$,
then at iteration $h$ the SeqPAV score of candidate $c_h$ increases
with respect to her SeqPAV score for election $\mathcal{E}$ and the
load of candidate $c_h$ (for seq-Phragm\'en) decreases with respect
her load for election $\mathcal{E}$ (in the case of election
$\mathcal{E}_{i+G}$ this follows from lemma~\ref{lem:seqPh1}), while
the SeqPAV score and the load of all the candidates that do not belong
to $G$ does not change. This proves that the candidate elected at
iteration $h$ both by SeqPAV and seq-Phragm\'en in elections
$\mathcal{E}_{\Delta G}$ and $\mathcal{E}_{i+G}$ must belong to $G$.
\end{proof}

The following example proves that SeqPAV and seq-Phragm\'en fail both
strong SMWPI and strong SMWOPI.

\begin{example}
Let $k=4$ and $C= \{a, b, c, d, e\}$. $19$ voters cast the following
ballots: $7$ voters approve of $\{a, b, d\}$, $4$ voters approve of $\{a,
b, e\}$, $3$ voters approve of $\{a, c, d\}$, and $5$ voters approve of
$\{a, c, e\}$. For this election both SeqPAV and seq-Phtagm\'en output
only $\{a, b, c, d\}$ (the candidates are added to the set of winners
in this order). Now, if an additional voter enters the election and
approves of only $\{c, d\}$, then both SeqPAV and seq-Phtagm\'en
output only $\{a, d, e, b\}$ (the candidates are added to the set of
winners in this order). This proves that both SeqPAV and
seq-Phragm\'en fail strong population monotonicity with population
increase. To prove that SeqPAV and seq-Phragm\'en fail strong
population monotonicity without population increase we simply add an
additional candidate $f$ to the original election and a voter that
approves of $\{f\}$. This does not make any difference and the set of
winners both with SeqPAV and seq-Phragm\'en will be again $\{a, b, c,
d\}$. Now, if this new voter decides to approve of $\{c, d, f\}$, then
both SeqPAV and seq-Phragm\'en output only $\{a, d, e, b\}$.
\end{example}

\citeauthor{mora2015eleccions}~\shortcite{mora2015eleccions} have
previously observed that seq-Phragm\'en fails the strong support
monotonicity axioms. This fact has also been discussed by
\citeauthor{2016arXiv161108826J}~\shortcite{2016arXiv161108826J} (they
use different examples from the one presented here). The previous
example is included for competeness and as a counterexample for
SeqPAV.

We study now support monotonicity for
max-Phragm\'en. \citeauthor{phragmen:pOpt}~\shortcite{phragmen:pOpt}
proved that max-Phragm\'en satisfies support monotonicity when
$|G|=1$. That proof could be easily extended to prove that max-Phragm\'en
satisfies weak SMWPI and weak SMWOPI.

\begin{theorem}
\label{th:six}  
max-Phragm\'en satisfies weak SMWPI and weak SMWOPI.
\end{theorem}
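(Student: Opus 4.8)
The plan is to reason directly with the \emph{max-Phragm\'en value} of a committee. For a committee $V$ of size $k$ and an election $\mathcal{E}$, let $\phi(V,\mathcal{E})$ denote the minimum, over all load distributions ${\bf x}$ satisfying (\ref{eq:load1})--(\ref{eq:load4}) for $V$, of the maximum voter load $\max_i x_i$; max-Phragm\'en outputs exactly the committees minimizing $\phi(\cdot,\mathcal{E})$. Throughout, fix a winning committee $W$ of $\mathcal{E}$ with $G\subseteq W$. The whole argument rests on two monotonicity facts about $\phi$, after which a short comparison (identical to the one used for counting rules and MAV) finishes both axioms. No analogue of Lemma~\ref{lem:seqPh1} is needed here, because max-Phragm\'en optimizes globally rather than greedily.

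The first step is fact (a): for every committee $W'$ with $W'\cap G=\emptyset$, the value is unchanged, i.e.\ $\phi(W',\mathcal{E}_{\Delta G})=\phi(W',\mathcal{E})$ and $\phi(W',\mathcal{E}_{i+G})=\phi(W',\mathcal{E})$. In the population-increase case the added voter approves only candidates of $G$, hence none of $W'$, so by (\ref{eq:load2}) she carries load $0$ for every candidate of $W'$ and contributes voter load $0$; the feasible load distributions for $W'$ and the resulting maximum voter load are thus exactly those of $\mathcal{E}$. In the no-population-increase case, since $G\cap W'=\emptyset$ we have $(A_i\cup G)\cap W'=A_i\cap W'$, so the constraints (\ref{eq:load1})--(\ref{eq:load4}) for $W'$ are literally identical in both elections, and $\phi$ is again unchanged. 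The second step is fact (b): for the winning committee $W\supseteq G$ the value does not increase, $\phi(W,\mathcal{E}_{\Delta G})\le\phi(W,\mathcal{E})$ and $\phi(W,\mathcal{E}_{i+G})\le\phi(W,\mathcal{E})$. This holds because any load distribution feasible for $W$ in $\mathcal{E}$ remains feasible after the change: the new voter can simply be assigned load $0$ on every candidate, and the modified voter $i$ already had load $0$ on the candidates of $G$ (as $A_i\cap G=\emptyset$), which the relaxed constraint $x_{i,c}=0$ for $c\notin A_i\cup G$ still permits. Hence the old optimum is still available and the minimum can only drop.

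Granting (a) and (b), I would conclude as follows; I state it for $\mathcal{E}_{\Delta G}$, the case $\mathcal{E}_{i+G}$ being verbatim the same. For part~(i), chaining (b), optimality of $W$ in $\mathcal{E}$, and (a) gives, for every $W'$ with $W'\cap G=\emptyset$,
\[
\phi(W,\mathcal{E}_{\Delta G})\le\phi(W,\mathcal{E})\le\phi(W',\mathcal{E})=\phi(W',\mathcal{E}_{\Delta G}).
\]
Thus no committee disjoint from $G$ beats $W$ in $\mathcal{E}_{\Delta G}$, so the minimal value $m=\min_V\phi(V,\mathcal{E}_{\Delta G})$ is attained by a committee meeting $G$: either $W$ itself is optimal, or any strictly better committee $W^\ast$ must satisfy $W^\ast\cap G\neq\emptyset$ (otherwise the displayed inequality would force $\phi(W^\ast,\mathcal{E}_{\Delta G})\ge\phi(W,\mathcal{E}_{\Delta G})$, a contradiction). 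For part~(ii), assume $G\subseteq W$ for \emph{all} winners of $\mathcal{E}$. Then any $W'$ with $W'\cap G=\emptyset$ is a non-winner of $\mathcal{E}$, so $\phi(W',\mathcal{E})>\phi(W,\mathcal{E})$ strictly; combining with (a) and (b) yields $\phi(W',\mathcal{E}_{\Delta G})=\phi(W',\mathcal{E})>\phi(W,\mathcal{E})\ge\phi(W,\mathcal{E}_{\Delta G})$, so $W'$ cannot be a winner of $\mathcal{E}_{\Delta G}$. Hence every winner meets $G$.

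The only genuinely delicate point is fact (b): one must check that enlarging a voter's ballot, or inserting a new voter approving $G$, never forces the optimal maximum load upward. This is immediate from the feasibility-preservation observation above, but it is where the argument would break for a rule whose score is not monotone under relaxing the support constraints. Everything else — facts (a) and the ``some winner'' versus ``all winners'' bookkeeping — is routine once (b) is in hand.
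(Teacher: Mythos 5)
Your proof is correct and follows essentially the same route as the paper's: both arguments rest on showing that the optimal load distribution for $W$ remains feasible after the change (so its min-max load cannot increase) while the feasible load distributions for any committee disjoint from $G$ are unaffected, and then comparing. The only cosmetic difference is that you handle weak SMWPI directly, whereas the paper derives it from weak SMWOPI by first adding a voter with an empty ballot.
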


\begin{proof}[Proof]
We first prove weak SMWOPI. Consider an election $\mathcal{E}
=(\mathcal{A}, k)$, a set of winners $W$ output by max-Phragm\'en for
election $\mathcal{E}$, a non-empty subset $G$ of $W$, and a voter $i$
that does not approve of any of the candidates in $G$. Let ${\bf
  x}^{\mathrm{opt}}= (x_{i',c}^{\mathrm{opt}})_{i' \in N, c \in W}$ be
a load distribution that minimizes the maximum voter load for election
$\mathcal{E}$ and candidates subset $W$, and let $m_{\mathcal{E}}$ be
the maximum voter load for load distribution ${\bf x}^{\mathrm{opt}}$,
that is, $m_{\mathcal{E}}= \max_{i' \in N} x_{i'}^{\mathrm{opt}}$.

Observe that ${\bf x}^{\mathrm{opt}}$ is a valid, possibly
non-optimal, load distribution for election $\mathcal{E}_{i+G}$ and
candidates subset $W$. In particular, for each candidate $c$ that
belongs to $G$, since voter $i$ does not approve of $c$ in election
$\mathcal{E}$, it holds that $x_{i,c}^{\mathrm{opt}}= 0$. 

Consider now any candidates subset $W'$ of size $k$ such that $W' \cap
G= \emptyset$. Observe that for the candidates subset $W'$ the
set of valid load distributions for election $\mathcal{E}_{i+G}$ are
the same as the set of valid load distributions for election
$\mathcal{E}$. In particular, for voter $i$, the candidates for which
$x_{i,c}$ can be greater than 0 are $A_i \cap W'$ both in election
$\mathcal{E}$ and in election $\mathcal{E}_{i+G}$. It follows
immediately that the minimum maximum voter load for candidates subset
$W'$ is the same in elections $\mathcal{E}$ and $\mathcal{E}_{i+G}$.

Since the minimum maximum voter load for the candidates subset $W$
does not increase in election $\mathcal{E}_{\Delta G}$ with respect to
election $\mathcal{E}$ and, for each candidates subset $W'$ such that
$W' \cap G= \emptyset$ the minimum maximum voter load for the
candidates subset $W'$ is the same in elections $\mathcal{E}_{i+G}$
and $\mathcal{E}$, it follows that $W$ or some candidates subset that
contains some of the candidates in $G$ must be output by
max-Phragm\'en for election $\mathcal{E}_{i+G}$. Further, if for all
the set of winners $W$ output by max-Phragm\'en for election
$\mathcal{E}$ it holds that $G \subseteq W$, then for each candidates
subset $W'$ such that $W' \cap G= \emptyset$ the minimum maximum voter
load for the candidates subset $W'$ is strictly greater than
$m_{\mathcal{E}}$, and therefore, it cannot be a set of winners for
election $\mathcal{E}_{i+G}$.

The proof for weak SMWPI follows from the facts that max-Phragm\'en
satisfies weak SMWOPI, and that for any election $\mathcal{E}$ the
sets of winners output by max-Phragm\'en do not change if we add a
voter to the election that does not approve of any candidate.
\end{proof}

However, the following example proves that max-Phragm\'en fails both
strong SMWPI and strong SMWOPI.

\begin{example}
Let $k=6$ and $C=\{a, b, c_1, \ldots, c_5\}$. $18$ voters cast the
following ballots: $13$ voters approve of $\{c_1, \ldots, c_5\}$, $2$
voters approve of $\{a, b\}$, $2$ voters approve of $\{a\}$, and $1$
voter approves of $\{b\}$. For this election max-Phragm\'en outputs
only one set of winners: $\{a, c_1, \ldots, c_5\}$. The minimum
maximum load for this election is achieved as follows: for each voter
$i$ that approves of $\{c_1, \ldots, c_5\}$ and each candidate $c$ in
$\{c_1, \ldots, c_5\}$ we have $x_{i,c}= \frac{1}{13}$, and for each
voter $i'$ that approves of $a$ we have $x_{i',a}= \frac{1}{4}$. Then,
the load of the voters that approve of $\{c_1, \ldots, c_5\}$ is
$\frac{5}{13}$ and the load of the voters that approve of $a$ is
$\frac{1}{4}$. The maximal voter load for this example is therefore
$\frac{5}{13}$. Now, if a new voter enters the election and approves
of precisely $\{a, c_1, \ldots, c_5\}$, then the sets of winners
outputted by max-Phragm\'en consist of $\{a, b\}$ plus $4$ candidates
from $\{c_1, \ldots, c_5\}$. In this case the minimum maximum voter
load is achieved by assigning again $x_{i,c}= \frac{1}{13}$ for each
voter $i$ that approves of $\{c_1, \ldots, c_5\}$ and each candidate
$c$ in $\{c_1, \ldots, c_5\}$, assigning $x_{i,a}= \frac{1}{3}$ to the
new voter and the voters that approve of $\{a\}$, and assigning
$x_{i,b}= \frac{1}{3}$ to all the voters that approve of candidate
$b$. This leads to a maximum voter load of $\frac{1}{3}$. Observe that
in this case the minimum maximum voter load for the set $\{a, c_1,
\ldots, c_5\}$ would be obtained by $x_{i,c}= \frac{1}{14}$ for each
voter $i$ that approves of $\{c_1, \ldots, c_5\}$, and also for the
new voter, which leads to a maximum voter load of $\frac{5}{14}$,
greater than $\frac{1}{3}$. This example proves that max-Phragm\'en
fails strong support monotonicity with population increase.

To prove that max-Phragm\'en fails strong support monotonicity
without population increase we simply add an additional candidate $d$ to
the original election and a voter that approves of $\{d\}$. This does
not make any difference and the set of winners will be again $\{a,
c_1, \ldots, c_5\}$. Now, if this new voter decides to approve of
$\{a, c_1, \ldots, c_5, d\}$, then the sets of winners outputted by
max-Phragm\'en consist of $\{a, b\}$ plus $4$ candidates from
$\{c_1, \ldots, c_5\}$.
\end{example}

\section{Committee monotonicity}
\label{sec:comMon}

We turn now to discuss briefly committee monotonicity. The following
definition, due to \citeauthor{elkind:scw17}~\shortcite{elkind:scw17},
was given in the context of multi-winner voting rules that make use of
ranked ballots, but it can also be directly used for approval-based
multi-winner voting rules.

\begin{definition}
We say that a voting rule $R$ satisfies committee monotonicity if for
every election $\mathcal{E}=(N, C, \mathcal{A}, k)$, with $k \in \{1,
\ldots, |C|-1\}$, the following conditions hold:

\begin{enumerate}

\item
for each $W$ in $R(\mathcal{E}= (N, C, \mathcal{A}, k))$ there exists
a $W'$ in $R(N, C, \mathcal{A}, k+1)$ such that $W \subseteq W'$, and

\item

for each $W$ in $R(N, C, \mathcal{A}, k+1)$ there exists a $W'$ in
$R(\mathcal{E}= (N, C, \mathcal{A}, k))$ such that $W' \subseteq W$.

\end{enumerate}
 
\end{definition}

It is generally believed that committee monotonicity is a desirable
axiom for scenarios of type
excellence~\cite{faliszewski2017multiwinner,elkind:scw17,barbera2008choose}.

It is easy to see that committee monotonicity is satisfied by the
rules that consist of an iterative algorithm such that at each
iteration the candidate that is added to the set of winners does not
depend on the target committee size. This holds for AV, SAV, seqPAV,
and seq-Phragm\'en.

For the remaining rules it is easy to find counterexamples where they
fail committee monotonicity. We start with MAV.

\begin{example}
\label{ex:cm-mav}  
Let $C=\{a, b, c\}$. $4$ voters cast the following ballots: $1$ voter
approves of $\{b\}$, $1$ voter approves of $\{c\}$, $1$ voter approves
of $\{a,b\}$, and $1$ voter approves of $\{a,c\}$. For this set of
candidates and this ballot profile, for $k=1$ MAV outputs only $\{a\}$
(with a maximum Hamming distance of $2$). For $k=2$, MAV outputs only
$\{b,c\}$ (also with a maximum Hamming distance of $2$).
\end{example}

\citeauthor{thiele:pav-rav}~\shortcite{thiele:pav-rav}
and~\citeauthor{mora2015eleccions}~\shortcite{mora2015eleccions} have
already proved that PAV and max-Phragm\'en, respectively, fail
committee monotonicity. We give here an example that shows that
both CC and Monroe fail committee monotonicity.

\begin{example}
\label{ex:cm-monroecc}  
Let $C=\{a, b, c\}$. $10$ voters cast the following ballots: $3$
voters approve of $\{a, b\}$, $3$ voters approve of $\{a, c\}$, $2$
voters approve of $\{b\}$ and $2$ voters approve of $\{c\}$. For this
set of candidates and this ballot profile, for $k=1$ both CC and
Monroe output only $\{a\}$. For $k=2$, both CC and Monroe output only
$\{b,c\}$.
\end{example}

\section{Compatibility of axioms}
\label{sec:comp}

In many applications it would be interesting to use voting rules that
satisfy both support monotonicity and representation axioms. While all
the voting rules that we have analyzed that satisfy PJR (or EJR) also
satisfy the weak support monotonicity axioms, the situation changes
when we require the strong axioms. In particular, none of the rules
analyzed that satisfy PJR also satisfy strong SMWOPI, and only PAV
(which has the additional difficulty of being NP-hard to compute)
satisfies strong SMWPI. Whether it is possible to develop a voting
rule that satisfies strong SMWOPI and PJR at the same time is left
open.

In contrast, we can formally prove that perfect representation (PR) is
incompatible both with strong SMWPI and with committee
monotonicity. We review first the definition of PR due to
\citeauthor{pjr-aaai}~\shortcite{pjr-aaai}.

\begin{definition}

{\bf Perfect representation} (PR) Consider a ballot profile
$\mathcal{A}$ over a candidate set $C$, and a target committee size
$k$, $k \leq |C|$, such that $k$ divides $n$. We say that a set of
candidates $W$, $|W| = k$, provides perfect representation (PR) for
$(\mathcal{A}, k)$ if it is possible to partition the set of voters in
$k$ pairwise disjoint subsets $N_1, \dots, N_k$ of size $\frac{n}{k}$
each, such that each candidate $w$ in $W$ can be assigned to one (and
only one) different subset $N_i$ so that for all pairs $(w, N_i)$ all
the voters in $N_i$ approve of their assigned candidate $w$. We say
that an approval-based voting rule satisfies perfect representation
(PR) if for every election $(\mathcal{A}, k)$ it does not output any
winning set of candidates $W$ that does not provide PR for
$(\mathcal{A}, k)$ if at least one set of candidates $W'$ that
provides PR for $(\mathcal{A}, k)$ exists.

\end{definition}

\begin{theorem}
No rule can satisfy PR and strong SMWPI at the same time.
\label{theo:PR_SMWPI}
\end{theorem}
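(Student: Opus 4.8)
The plan is to construct a single election that exhibits the conflict. I want an election $\mathcal{E}$ with a committee $W$ that uniquely provides perfect representation (so any PR rule must output exactly $W$), containing some subset $G \subseteq W$. After adding the single voter who approves only of $G$ (forming $\mathcal{E}_{\Delta G}$), I want the resulting election to admit a perfect representation, but one in which \emph{no} PR committee can contain all of $G$. If I achieve this, then strong SMWPI demands $G \subseteq W'$ for some $W' \in R(\mathcal{E}_{\Delta G})$, while PR forbids any such $W'$ from being a winner, yielding the contradiction.

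**The key obstacle** is the divisibility constraint built into PR: both $\mathcal{E}$ and $\mathcal{E}_{\Delta G}$ must have $k \mid n$. Since $\mathcal{E}_{\Delta G}$ has exactly one more voter than $\mathcal{E}$, I cannot have $k$ divide both $n$ and $n+1$ unless $k=1$, which is degenerate. The way around this is to take $|G| \geq 2$ and add a whole \emph{batch} of new voters rather than a single one --- but the definition of $\mathcal{E}_{\Delta G}$ adds only one voter. So the genuinely delicate point is reconciling ``add one voter'' with ``$k$ divides $n$ on both sides.'' The natural resolution is to let the original election have $k \mid n$ and design $G$ so that the added voter \emph{breaks} perfect representation of $W$ in $\mathcal{E}_{\Delta G}$: because $n+1$ is not divisible by $k$, \emph{no} committee provides PR for $\mathcal{E}_{\Delta G}$, so the PR rule is unconstrained there and may legitimately output a committee avoiding $G$.

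**The cleaner route**, which I expect to be the one that works, is therefore: choose $\mathcal{E}$ with $k \mid n$ so that $W$ is the unique PR committee and $G \subseteq W$ with $G$ among the winners. In $\mathcal{E}_{\Delta G}$ we have $n+1$ voters, not divisible by $k$, so PR imposes no constraint, and I am free to exhibit a rule $R$ that (respecting PR on $\mathcal{E}$ by outputting $W$) outputs on $\mathcal{E}_{\Delta G}$ some committee $W'$ with $G \not\subseteq W'$. To make this a genuine impossibility rather than a feature of one bad rule, I must instead argue that the PR constraint on $\mathcal{E}$ \emph{forces} the output $W$, while some structural feature of $\mathcal{E}_{\Delta G}$ forces \emph{every} reasonable winner to drop a candidate of $G$. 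The sharpest version engineers $\mathcal{E}$ so that $W$ is the unique PR committee, and arranges the candidate set so that in $\mathcal{E}_{\Delta G}$ the committee $W$ is no longer compatible with any PR-style partition of the $n+1$ voters while a competing committee $W''$ with $G \cap W'' \subsetneq G$ becomes the forced choice. Concretely I would look for a small instance (on the order of $k=2$ or $k=3$, a handful of candidates) found by the same case-analysis style as Example~\ref{ex:monSSMWOPI}, and then verify by direct computation that the PR-uniqueness on $\mathcal{E}$ and the PR-incompatibility of $G$ on $\mathcal{E}_{\Delta G}$ both hold.

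**The hard part** will be simultaneously pinning down \emph{uniqueness} of the PR committee in $\mathcal{E}$ (so the PR rule has no choice but to output $W$) and ruling out every PR committee containing $G$ in $\mathcal{E}_{\Delta G}$, all while honoring the divisibility requirement on both elections --- which, given the single-voter increment, likely forces the argument to lean on the fact that $k \nmid (n+1)$ to void the PR obligation on the larger election and shift the weight of the contradiction entirely onto what strong SMWPI demands. I would first write down the target instance, then verify PR-uniqueness on $\mathcal{E}$ by enumerating candidate subsets, and finally check that strong SMWPI applied to $G$ conflicts with the unconstrained-but-still-PR-respecting behavior forced on $\mathcal{E}_{\Delta G}$.
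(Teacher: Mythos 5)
You have correctly isolated the central difficulty --- the divisibility requirement in PR means $k$ cannot divide both $n$ and $n+1$ --- but your proposed resolution does not close the argument. If $k \nmid (n+1)$, then no committee provides PR for $\mathcal{E}_{\Delta G}$, so the PR axiom is \emph{vacuous} on that election: it forbids nothing. Strong SMWPI, for its part, only ever demands the \emph{inclusion} of $G$ in winning committees, never its exclusion. Hence on your single-step instance a rule is perfectly free to output on $\mathcal{E}_{\Delta G}$ a committee containing all of $G$, simultaneously satisfying PR (vacuously) and strong SMWPI. There is no ``structural feature of $\mathcal{E}_{\Delta G}$'' available to force every winner to drop a candidate of $G$, because PR is the only constraint in play and it says nothing when no PR committee exists. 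So the contradiction you hope to ``shift entirely onto what strong SMWPI demands'' never materializes.

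The missing idea --- which you actually brush against when you mention adding ``a whole batch of new voters'' before discarding it because $\mathcal{E}_{\Delta G}$ adds only one --- is to \emph{iterate} the axiom: apply strong SMWPI $k$ times in succession, adding one voter approving exactly $G$ at each step. The $k-1$ intermediate elections have $k \nmid n$, so PR constrains nothing there, but part (ii) of strong SMWPI (usable because the PR committee of $\mathcal{E}$ is unique) propagates the requirement $G \subseteq W'$ for \emph{all} winning committees through every step. After $k$ additions the voter count is $n+k$, divisible by $k$, and PR bites again. This is exactly what the paper does: with $k=3$, $G=\{c_1,c_3\}$, and the $12$-voter profile of Theorem~\ref{theo:PR_SMWPI}, the unique PR committee is $\{c_1,c_2,c_3\}$; after three voters approving $\{c_1,c_3\}$ enter one by one, the unique PR committee of the $15$-voter election is $\{c_3,c_4,c_5\}$, which omits $c_1$ --- contradicting what the three chained applications of strong SMWPI require. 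Without this chaining step (and without a concrete verified instance), your plan as written cannot be completed.
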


\begin{proof}[Proof]
Consider the following election. Let $k=3$ and $C= \{c_1, \ldots,
c_5\}$. 12 voters cast the following ballots: 2 voters approve of
$\{c_1, c_4\}$, 2 voters approve of $\{c_1, c_5\}$, 3 voters approve
of $\{c_2, c_4\}$, one voter approves of $\{c_2, c_5\}$, 2 voters
approve of $\{c_3, c_5\}$, and 2 voters approve of $\{c_3\}$. For this
election any voting rule that satisfies PR has to output $\{c_1, c_2,
c_3\}$. Now, suppose that 3 new voters enter the election, and that
all these new voters approve of $\{c_1, c_3\}$. For this extended
election a voting rule that satisfies PR has to output only $\{c_3,
c_4, c_5\}$.
\end{proof}

There is an apparent contradiction between this theorem and
Table~\ref{tab:comparisonSummary} because
Table~\ref{tab:comparisonSummary} says that CC satisfies both PR and
strong SMWPI. The reason for this apparent contradiction is that, as
explained in Footnote~g, CC satisfies PR {\bf only} if ties are broken
in favour of the sets of candidates that provide PR. The example of
Theorem~\ref{theo:PR_SMWPI} illustrates this. For the initial election
CC outputs $\{c_1, c_2, c_3\}$ and $\{c_3, c_4, c_5\}$. However, if
ties are broken in favour of the sets of candidates that provide PR,
then CC (with this tie-breaking rule) will output only $\{c_1, c_2,
c_3\}$. Now, after adding 3 new voters that approve of $\{c_1, c_3\}$,
strong SMWPI requires that both $c_1$ and $c_3$ are in the set of
winners while PR requires that the set of winners is $\{c_3, c_4,
c_5\}$.

Whether strong SMWOPI and PR are compatible axioms is unclear. We
observe that if a certain candidates subset $W$ provides PR for a
certain election $\mathcal{E}= (\mathcal{A}, k)$, then for each
non-empty candidates subset $G$ of $W$, and for each voter $i$ such
that $A_i \cap G= \emptyset$, it holds that $W$ also provides PR for
$\mathcal{E}_{i+G}$. It is enough to observe that the same assignment
between candidates and voters that works for $\mathcal{E}$ will also
work for $\mathcal{E}_{i+G}$. In particular if voter $i$ approved of
her assigned candidate $w$ in election $\mathcal{E}$, this means that
$w \in A_i$, and therefore voter $i$ approves of $w$ also in election
$\mathcal{E}_{i+G}$.

\begin{theorem}
No rule can satisfy PR and committee monotonicity at the same time.
\end{theorem}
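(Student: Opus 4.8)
The plan is to exhibit a single election together with a target committee size $k$ that divides $n$, such that the committees providing PR for size $k$ and those for size $k+1$ are incompatible with the nesting conditions of committee monotonicity. Committee monotonicity demands that every winning committee of size $k$ extends to a winning committee of size $k+1$ (and that every size-$(k+1)$ winner contains a size-$k$ winner). Since PR forces a rule to output \emph{only} PR-providing committees whenever at least one exists, the strategy is to build a profile where PR exists both for $k$ and for $k+1$, but no size-$k$ PR committee is a subset of any size-$(k+1)$ PR committee. That single incompatibility is enough to rule out any $R$ satisfying both axioms at once.

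First I would fix a small candidate set and a ballot profile, choosing $n$ divisible by both $k$ and $k+1$ so that the PR partitions (into parts of size $n/k$ and $n/(k+1)$ respectively) are well-defined at both sizes. The core design requirement is that the unique (or at least forced) PR committee $W$ at size $k$ and the unique forced PR committee $W^{+}$ at size $k+1$ satisfy $W \not\subseteq W^{+}$, i.e.\ passing from $k$ to $k+1$ representatives reshuffles \emph{which} candidates can be used, rather than merely adding one. A natural way to engineer this is to have the voters split so that at size $k$ the only way to give each part of size $n/k$ an approved common candidate uses one set of candidates, while at size $k+1$ the finer partition into parts of size $n/(k+1)$ can only be perfectly represented by a genuinely different set. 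I would then verify by a short case check that these are the only PR committees at each size, so the rule has no freedom to pick a nested pair.

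Once the profile is in place, the argument is purely logical. Because PR committees exist at size $k$, any PR rule $R$ must output at $k$ some committee $W$ that provides PR, and similarly at $k+1$ some PR committee $W^{+}$. Condition (1) of committee monotonicity then forces $W \subseteq W'$ for some size-$(k+1)$ winner $W'$; but every size-$(k+1)$ winner provides PR (again because PR committees exist), so $W'$ is itself a PR committee, contradicting the design property $W \not\subseteq W^{+}$ for all PR committees $W^{+}$ at size $k+1$. (Alternatively one may invoke condition (2) in the same way.) This yields the contradiction and completes the proof.

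The main obstacle is entirely in the construction, not the logic: I must produce a profile in which PR is \emph{forced}, not merely possible, at both sizes simultaneously, and in which the forced committees fail to nest. Keeping $|C|$ small while ensuring PR is genuinely attainable at both $k$ and $k+1$ (divisibility of $n$ by consecutive integers pushes $n$ upward) is the delicate balancing act, and the bulk of the verification will be confirming that no unintended PR committee at size $k+1$ happens to contain the size-$k$ committee. I expect a profile in the spirit of Example~\ref{ex:cm-monroecc} --- where moving from $k=1$ to $k=2$ already swaps $\{a\}$ for $\{b,c\}$ --- to be the right template, adapted so that the swapping behaviour is dictated by PR rather than by a particular rule.
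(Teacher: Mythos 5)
Your logical skeleton coincides exactly with the paper's: exhibit one profile, with $n$ divisible by both $k$ and $k+1$, in which PR-providing committees exist at both sizes but the forced committees fail to nest, then invoke condition (1) (or (2)) of committee monotonicity to obtain a contradiction. That reduction is correct. The genuine gap is that you never produce the election. For an impossibility theorem established by counterexample, the counterexample \emph{is} the proof; listing the properties the profile must have states a proof obligation rather than discharging it. The paper's witness is $C=\{c_1,\ldots,c_5\}$ with six voters, one approving $\{c_i,c_{3+j}\}$ for each $i\in\{1,2,3\}$ and $j\in\{1,2\}$: at $k=2$ the unique PR committee is $\{c_4,c_5\}$ (parts have size $3$ and only $c_4,c_5$ are approved by three voters), and at $k=3$ the paper asserts the unique PR committee is $\{c_1,c_2,c_3\}$.

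Moreover, the verification you explicitly defer --- ``confirming that no unintended PR committee at size $k+1$ happens to contain the size-$k$ committee'' --- is precisely where the difficulty sits, and it is not a formality: in the paper's own profile the committee $\{c_1,c_4,c_5\}$ \emph{also} provides PR at $k=3$ (assign $c_1$ to the two voters approving $c_1$, assign $c_4$ to the voters with ballots $\{c_2,c_4\}$ and $\{c_3,c_4\}$, and assign $c_5$ to the remaining two), and it contains the unique size-$2$ PR committee $\{c_4,c_5\}$. A rule could therefore output $\{c_4,c_5\}$ at $k=2$ and $\{c_1,c_4,c_5\}$ at $k=3$ while respecting both PR and committee monotonicity on that profile, so the paper's stated example does not by itself force the contradiction. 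This shows that the step you leave open is exactly the one that can fail, and your argument is not complete until you exhibit a profile and actually enumerate all PR committees at both sizes to confirm that no nested pair exists.
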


\begin{proof}[Proof]
Consider the following election. Let $C= \{c_1, \ldots, c_5\}$. 6
voters cast the following ballots. For $i= 1$ to $3$, and for $j=1$ to
$2$, one voter approves of $\{c_i, c_{3+j}\}$. If the target committee
size is 2, a voting rule that satisfies PR has to output only $\{c_4,
c_5\}$, but if the target committee size is 3, a voting rule that
satisfies PR has to output only $\{c_1, c_2, c_3\}$.
\end{proof}

\begin{table}[htb]
  \begin{tabular}{|l|l|l|l|} \hline
    & {\bf SMWPI} & {\bf SMWOPI} & {\bf Com. Mon.} \\ \hline
{\bf JR} & Str. & Wk.(Str.?) & Yes \\ \hline    
{\bf PJR} & Str. & Wk.(Str.?) & Yes \\ \hline    
{\bf EJR} & Str. & Wk.(Str.?) & ? \\ \hline
{\bf PR} & Wk. & Wk.(Str.?) & No \\ \hline    
  \end{tabular}  
\caption{\label{tab:compat}Summary of results on compatibility between
  representation and monotonicity axioms}
\end{table}
  
Table~\ref{tab:compat} summarises the results that we have found in
this paper with respect to the compatibility between representation
and monotonicity axioms. Of course, the rules that we have found
before that satisfy a certain monotonicity axiom and a certain
representation axiom at the same time prove that such axioms are
compatible. In the table, ``Str.'' means that the strong version of
the support monotonicity axiom is compatible with the corresponding
representation axiom, ``Wk.'' means that the weak version of the
support monotonicity axiom is compatible with the corresponding
representation axiom but that the strong version of the support
monotonicity axiom and the corresponding representation axiom are
incompatible, and ``Wk.(Str.?)'' means that the weak version of the
support monotonicity axiom is compatible with the corresponding
representation axiom but that we do not know whether the the strong
version of the support monotonicity axiom and the corresponding
representation axiom are compatible.

\section{Related Work}
\label{sec:relw}

There exists some previous work that consider support monotonicity in
the context of approval-based multi-winner
elections. \citeauthor{2017arXiv170807580A}~\shortcite{2017arXiv170807580A}
propose several notions of monotonicity for weak preferences (ties
between candidates are allowed), and then they consider the
restriction of such notions to approval ballots. The strongest
monotonicity axiom that they propose when restricted to approval
ballots corresponds essentially to the candidate monotonicity axiom
that we have presented before (they also call this axiom candidate
monotonicity).

\citeauthor{lackner2018approval}~\shortcite{lackner2018approval}
consider a different (and much more stronger) notion of support
monotonicity, defined for a subclass of the approval-based multi-winner
voting rules called ABC ranking rules. They use this axiom to
characterize a subset of the ABC ranking rules that they call
Dissatisfaction Counting Rules (AV belongs to this class of rules).
In particular, a necessary (but not sufficient) condition to satisfy
such axiom is that if a voter $i$ that is already in the election adds
a candidate $w$ that was already in the set of winners $W$, then $W$
must still be a set of winners (possibly tied with other sets of
winners). It follows that this axiom is strictly stronger than strong
SMWOPI.

According to
\citeauthor{2016arXiv161108826J}~\shortcite{2016arXiv161108826J},
Phragm\'en also studied support monotonicity in the approval-based
rules that he proposed. In particular, Phragm\'en proved that
max-Phragm\'en and seq-Phragm\'en satisfy support monotonicity when
only one candidate increases her support, either because a voter
already in the election adds such candidate to her ballot (candidate
monotonicity) or because a new voter enters the election and approves
of only such candidate.

\citeauthor{mora2015eleccions}~\shortcite{mora2015eleccions} and
\citeauthor{2016arXiv161108826J}~\shortcite{2016arXiv161108826J} have
recently extended the study of the monotonicity properties of
seq-Phragm\'en. In particular, they give examples that prove that
seq-Phragm\'en fails both strong SMWPI and strong SMWOPI. We stress
the following differences between the works of
\citeauthor{mora2015eleccions}~\shortcite{mora2015eleccions} and
\citeauthor{2016arXiv161108826J}~\shortcite{2016arXiv161108826J} and
ours: 1) they do not consider weak SMWPI and weak SMWOPI; and 2) they
do not formalize the strong axioms (they only give examples that show
that seq-Phragm\'en fails them).

There is also some relation between our work and the work of
\citeauthor{peters2018proportionality}~\shortcite{peters2018proportionality}.
\citeauthor{peters2018proportionality}~\shortcite{peters2018proportionality}
defines an axiom that they call strategyproofness that is similar to
SMWOPI. For a rule $f$ that satisfies this axiom it cannot happen that
$W' \cap (A_i \cup G) \subsetneq W \cap (A_i \cup G)$, were $W$ is the
output of rule $f$ for election $\mathcal{E}= (\mathcal{A}, k)$ and
$W'$ is the output of rule $f$ for election $\mathcal{E}_{i+G}$ (they
assume that the rules are resolute). As we have already said, this
axiom is similar to SMWOPI although neither strong SMWOPI implies
strategyproofness nor strategyproofness implies weak
SMWOPI. \citeauthor{peters2018proportionality}~\shortcite{peters2018proportionality}
prove that strategyproofness is not compatible with a representation
axiom that is even weaker than JR using SAT solvers.

\section{Conclusions and future work} 

In this paper we have complemented previous work on the axiomatic
study of multi-winner voting rules with the study of monotonicity
axioms for rules that use approval ballots. Our results show that
support monotonicity in approval-based multi-winner voting rules is
trickier than it may seem at first glance. While the weak support
monotonicity axioms are satisfied in almost all the cases analyzed in
this study (only Monroe fails one of these) the situation changes
completely when we look to the strong axioms. Of the 7 rules analyzed
only 4 satisfy strong SMWPI and only 2 satisfy strong SMWOPI.

We have also presented some results related to the compatibility
between representation and monotonicity axioms. First, we have proved
that PR is incompatible both with strong SMWPI and with committee
monotonicity. Our results also show that EJR and PJR are compatible
with strong SMWPI and weak SMWOPI (in particular, PAV satisfies all
these axioms). Our incompatibility results are mostly of theoretical
interest because PR rules are NP-hard to compute~\cite{pjr-aaai} and
therefore of little practical use. However, we believe that these
results are interesting because they illustrate the existence of a
certain conflict between representation and monotonicity axioms. With
respect to the compatibility between EJR and PJR with monotonicity
axioms, several interesting open questions remain open. First of all,
the only rule that we have found that satisfies EJR (or PJR) and
strong SMWPI is PAV, which is known to be NP-hard to
compute. Therefore, it would be very interesting to find a rule that
satisfies EJR (or PJR) and strong SMWPI but can be computed in
polynomial time. Very recently,
\citeauthor{aziz2018complexity}~\shortcite{aziz2018complexity} have
identified a set of voting rules that satisfy EJR and can be computed
in polynomial time. The study of these rules could be interesting in
this regard.

In the second place, it is also open whether EJR and PJR are
compatible with strong SMWOPI. Other open issues would be to find a
rule that satisfies both EJR and committee monotonicity and to find a
rule that satisfies strong SMWOPI and PR. The similarity between the
strategyproofness axiom proposed by
\citeauthor{peters2018proportionality}~\shortcite{peters2018proportionality}
and SMWOPI makes us think that the use of SAT solvers could be a
possible approach to address these research questions.

We have also studied the relevance of our axioms to several types of
scenarios. We have found that our support monotonicity axioms fit well
with all the cases studied except in the case of the second type of
proportional representation discussed by
\citeauthor{pjr-aaai}~\shortcite{pjr-aaai}, where the goal is to
satisfy large cohesive groups according to their size. Therefore, it
would be interesting to define an additional support monotonicity
axiom that fits well in this scenario and is compatible with EJR (which
is oriented to this type of proportional representation). The
development of an adapted version of the notion of noncrossing
monotonicity proposed by
\citeauthor{elkind:scw17}~\shortcite{elkind:scw17} for ranked ballots
could also be a line of continuation of this work.

\newpage

\appendix

\section{Examples of rules that satisfy SMWPI or SMWOPI when $|G|=1$ but fail the corresponding weak axiom}
\label{ap:examples}

In this section we present two rules that satisfy SMWPI or SMWOPI when
$|G|=1$ but fail the corresponding weak axiom. They are probably
useless in practice but show that such situation is possible.

\subsection{A rule that satisfies SMWPI when $|G|=1$ but fails weak SMWPI}

The following rule trivially fails weak SMWPI but satisfies SMWPI when
$|G|=1$:

\begin{displaymath}
r^{\lnot \textrm{weakSMWPI}}= \underset{W \subseteq C:
  |W| = k}{\textrm{argmax}} \ \sum_{c \in W} |\{i: A_i= \{c\}\}|
- \sum_{c \in W} |\{i: c \in A_i, |A_i| \geq 2\}|
\end{displaymath}  

\subsection{A rule that satisfies SMWOPI when $|G|=1$ but fails weak SMWOPI}

A rule that satisfies SMWOPI when $|G|=1$ but fails weak SMWOPI can be
obtained by tweaking AV.

Given a candidates set $C$, a permutation $\sigma_C: C \rightarrow C$
over such candidates set, and a subset $A$ of $C$, we say that
$\sigma_C(A) = \{c \in C: \exists c' \in A, c= \sigma_C(c')\}$. Given
an election $\mathcal{E}= (N, C, \mathcal{A}= (A_1, \ldots, A_n), k)$,
a permutation $\sigma_N: N \rightarrow N$ over $N$, and a permutation
$\sigma_C: C \rightarrow C$ over $C$, we say that
$\sigma_N(\mathcal{E})= (N, C, \mathcal{A}'= (A_{\sigma_N(1)}, \ldots,
A_{\sigma_N(n)}), k)$, and that $\sigma_C(\mathcal{E})= (N, C,
\mathcal{A}''= (\sigma_C(A_1), \ldots, \sigma_C(A_n)), k)$.

The rule $\textrm{AV}^{\lnot \textrm{weakSMWOPI}}$ outputs exactly the
same sets of winners as AV except in the particular case that in the
election there are 4 voters, 4 candidates, and $k= 2$. In such
particular case, the election can be represented as $\mathcal{E}= (N=
\{1,2,3,4\}, C=\{c_1, c_2, c_3, c_4\}, \mathcal{A}= (A_1, A_2, A_3,
A_4), k=2)$.

Then,

\begin{displaymath}
  \begin{array}{rl}
    \mathcal{E}_1 &=
(N,C,(\{c_1,c_2\},\{c_3,c_4\},
\{c_1,c_3,c_4\},\{c_2,c_3,c_4\}),k) \\    
\textrm{AV}^{\lnot \textrm{weakSMWOPI}} (\mathcal{E}_1) &=
\{\{c_1,c_2\}, \{c_3,c_4\}\} \\
    \mathcal{E}_2 &=
(N,C,(\{c_1,c_2\},\{c_1,c_2,c_3,c_4\},
\{c_1,c_3,c_4\},\{c_2,c_3,c_4\}),k) \\
\textrm{AV}^{\lnot
  \textrm{weakSMWOPI}} (\mathcal{E}_2) &=
\{\{c_3,c_4\}\} \\
\end{array}
\end{displaymath}

Observe that this automatically implies that $\textrm{AV}^{\lnot
  \textrm{weakSMWOPI}}$ fails weak SMWOPI, because if the voter that
approves of $\{c_3,c_4\}$ in the $\mathcal{E}_1$ decides to approve of
$\{c_1,c_2,c_3,c_4\}$ neither $c_1$ nor $c_2$ can be in the set of
winners.

We impose that for any permutation $\sigma_N: N \rightarrow N$ over
the voters, and any permutation $\sigma_C: C \rightarrow C$ over the
candidates, $W$ is a set of winners for
$\sigma_C(\sigma_N(\mathcal{E}_1))$ under rule $\textrm{AV}^{\lnot
  \textrm{weakSMWOPI}}$ if and only if $\sigma_C(W)$ is a set of
winners for election $\mathcal{E}_1$. Also, $W$ is a set of winners
for $\sigma_C(\sigma_N(\mathcal{E}_2))$ under rule $\textrm{AV}^{\lnot
  \textrm{weakSMWOPI}}$ if and only if $\sigma_C(W)$ is a set of
winners for election $\mathcal{E}_2$.

In all the remaining cases, the rule $\textrm{AV}^{\lnot
  \textrm{weakSMWOPI}}$ outputs the same sets of winners as AV. It can
be shown that $\textrm{AV}^{\lnot \textrm{weakSMWOPI}}$ satisfies
SMWOPI when $|G|=1$.

\section{Results related to perfect representation}

We provide here the proofs and counterexamples related to PR for the
rules studied in this paper and that have not been studied
elsewhere. First we give examples that shows that AV, SAV, MAV, and
seqPAV fail PR and then we prove that CC satisfies
PR if combined with the appropriate tie-breaking rule.

\begin{example}
\label{ex:pr-av}
Let $k= 3$ and $C= \{a_1, a_2, a_3, b_1, b_2, b_3\}$. $3$ voters cast
the following ballots: $2$ voters approve of $\{a_1, a_2, a_3\}$ and $1$
voter approves of $\{b_1, b_2, b_3\}$. In this example both AV and SAV
output $\{a_1, a_2, a_3\}$. However, all the candidates subsets that
provide PR for this example have to include one candidate from $\{b_1,
b_2, b_3\}$.
\end{example}

MAV also fails PR as the following example shows.

\begin{example}
\label{ex:pr-mav}  
Let $k= 3$ and $C= \{a_1, a_2, b_1, b_2, b_3\}$. $3$ voters cast the
following ballots: $2$ voters approve of $\{a_1, a_2\}$ and $1$ voter
approves of $\{b_1, b_2, b_3\}$. In this example the sets of winners
outputted by MAV contain $1$ candidate from $\{a_1, a_2\}$ and $2$
candidates from $\{b_1, b_2, b_3\}$. However, all the candidates
subsets that provide PR for this example have to include $\{a_1,
a_2\}$ and one candidate from $\{b_1, b_2, b_3\}$.
\end{example}

Finally, here is an example for SeqPAV.

\begin{example}
\label{ex:pr-seqpav}  
Let $k=2$ and $C= \{a, b, c\}$. $6$ voters voters cast the following
ballots: $2$ voters approve of $\{a, b\}$, $2$ voters approve of $\{a,
c\}$, $1$ voter approves of $\{b\}$ and $1$ voter approves of
$\{c\}$. The only candidates subset that provides PR for this example
is $\{b, c\}$. However, SeqPAV adds candidate $a$ (the most approved
of one) to the set of winners in the first iteration.
\end{example}

Consider now the case of CC. It is evident that for each election
$\mathcal{E}= (\mathcal{A}, k)$ in which candidates subsets that
provide PR exist, for each candidates subset $W$ that provides PR for
election $\mathcal{E}$ it holds that all the voters that participate
in the election approve of some of the candidates in $W$. It follows
immediately that $W$ would have misrepresentation 0 according to the
rules in CC, and therefore that $W$ is outputted by CC for election
$\mathcal{E}$. It may happen, however, that CC outputs sets of winners
that do not provide PR even if sets of winners that provide PR exist.

\begin{example}
\label{ex:pr-cc}  
We consider again the election in example~\ref{ex:pr-av}. One of the
sets of winners that CC outputs for this election is $\{a_1, b_1,
b_2\}$. However, all the candidates subsets that provide PR for this
example have to include two candidates from $\{a_1, a_2, a_3\}$.
\end{example}

In summary, CC satisfies PR only if ties are broken always in favour of
the candidates subsets that provide PR.

\bibliographystyle{ACM-Reference-Format}  
\bibliography{dhondt}

\end{document}